\documentclass[aps,prx,reprint,superscriptaddress,longbibliography,
  amsmath,amssymb,floatfix]{revtex4-2}
\usepackage{amsmath,amssymb,amsthm,mathtools}
\usepackage{graphicx,booktabs,array}
\usepackage{tikz}
\usetikzlibrary{arrows.meta,positioning,calc,fit,backgrounds}
\usepackage{microtype}
\usepackage{enumitem}
\usepackage[colorlinks=true,allcolors=blue!45!black]{hyperref}
\usepackage[nameinlink,noabbrev]{cleveref}
\newtheorem{theorem}{Theorem}
\newtheorem{lemma}{Lemma}
\newtheorem{corollary}{Corollary}
\crefname{theorem}{Theorem}{Theorems}
\crefname{lemma}{Lemma}{Lemmas}
\crefname{corollary}{Corollary}{Corollaries}
\crefname{appendix}{Appendix}{Appendixes}

\newcommand{\bits}{\{0,1\}}
\newcommand{\CP}{\operatorname{CP}}
\newcommand{\CSWAP}{\operatorname{CSWAP}}
\newcommand{\ADD}{\operatorname{ADD}}
\newcommand{\MADD}{\operatorname{MADD}}
\newcommand{\Route}{\mathcal{R}}
\newcommand{\Query}{\mathcal{Q}}

\newcommand{\CHW}{C_{\mathrm{HW}}}
\definecolor{inkblue}{RGB}{0,94,135}
\definecolor{lightblue}{RGB}{230,241,247}
\definecolor{amber}{RGB}{165,96,0}
\definecolor{lightamber}{RGB}{255,243,221}
\tikzset{qwire/.style={line width=.55pt},
  qbox/.style={draw=inkblue,fill=lightblue,minimum height=6mm,
    inner sep=3pt,font=\small,align=center},
  dbox/.style={draw=amber,fill=lightamber,minimum height=6mm,
    inner sep=3pt,font=\small,align=center},
  flabel/.style={font=\small,align=left}}
\begin{document}
\title{Exact quantum Hamming-weight computation at optimal depth and minimum width}

\author{Wei Zi}
\email{ziwei.quantum@outlook.com}
\affiliation{Quantum Science Center of Guangdong-Hong Kong-Macao Greater Bay Area, Shenzhen, China}

\author{Pei Yuan}
\email{peiyuan0104@gmail.com}
\affiliation{Independent Researcher, Shenzhen, Guangdong, China}

\author{Junhong Nie}
\email[Corresponding author: ]{123miyoi@gmail.com}
\affiliation{Shandong University, Jinan, Shandong, China}

\begin{abstract}
Hamming-weight computation and symmetric Boolean functions are
basic primitives for quantum arithmetic and state preparation,
where circuit depth and qubit overhead matter.
We construct an exact circuit computing the Hamming weight of
an $n$-qubit register with depth $O(\log n)$ and size
$O(n\log n)$, using only the $n$ input qubits and
$m=\lceil\log_2(n+1)\rceil$ binary output qubits.
The circuit uses arbitrary one-qubit gates and CNOTs with
all-to-all connectivity.
Both depth and total width are optimal.
Our constructions combine tree structures and pipelining,
temporarily using input qubits as workspace.
We also give exact circuits for every symmetric Boolean function
with the same depth and size bounds, using only $m$ ancillary
qubits returned to zero.
Finally, we apply the counter to exact preparation of balanced
Dicke states for even $n$.
With only $O(\log n)$ ancillary qubits, we obtain expected depth
$O(\log^2 n)$ using measurements and classical feedforward,
and depth $O(n^{1/4}\log n)$ using unitary gates alone.
\end{abstract}

\maketitle

\section{Introduction}
\label{sec:introduction}

Many quantum algorithms need to determine the Hamming weight
of a bit string without resolving its individual bits. A coherent
counter records this weight in a binary register while preserving
the input and acting coherently on superpositions. Such counters appear in
entanglement concentration~\cite{KayeMosca2001}, in arithmetic
constructions that reduce the cost of applying equal phase
rotations~\cite{Gidney2018}, and in spin-sector projection for
quantum state preparation~\cite{KoboriEtAl2026}.
Although an $n$-bit input has only $n+1$ possible weights,
computing the weight requires information from every input bit.
This makes counting a basic setting for studying how much
workspace is needed to achieve small circuit depth.
Related depth--space questions arise in state preparation and
unitary synthesis~\cite{SunEtAl2023,YuanZhang2023,ZhangLiYuan2022}.

Several constructions address the depth--space tradeoff for
unitary counting. Standard reversible carry-save counting
achieves depth $O(\log n)$ with $O(n)$ clean ancillary qubits
\cite{Gossett1998,ZiYuanNieZhang2026}.
Takahashi and Tani~\cite{TakahashiTani2021} obtained depth
$O(\log^2 n)$ with $O(\log n)$ initialized qubits and
$O(n\log^2 n)$ uninitialized qubits.
Zi, Nie, and Sun~\cite{ZiNieSun2025} removed this additional
workspace while retaining depth $O(\log^2 n)$.
More recently, Zi et al.~\cite{ZiYuanNieZhang2026} obtained
depth $O(\log n)$ with sublinear additional clean workspace.
Clean qubits are initialized to zero; uninitialized, or dirty,
qubits may start in arbitrary states and must be restored after use.

Measurement-assisted constructions provide another route to
shallow counting. Piroli et al.~\cite{PiroliStyliarisCirac2024}
and Rethinasamy et al.~\cite{RethinasamyLaBordeWilde2024}
gave logarithmic-depth Hamming-weight measurements with linear
ancillary space, preserving coherence within each weight
subspace. These constructions use intermediate measurements
and classical feedforward.
For circuits built from elementary one- and two-qubit gates
without measurements, the central question remains:
can optimal depth and minimum width be attained simultaneously,
using only the input and the prescribed binary output register?

We answer this question affirmatively. Let $\mathsf{X}$ contain the $n$
input qubits, and let $\mathsf{W}$ contain
$m=\lceil\log_2(n+1)\rceil$ output qubits. Writing $|x|$ for
the Hamming weight of $x\in\{0,1\}^n$, our circuit implements
\begin{equation}
  \CHW:\quad |x\rangle_\mathsf{X}|0^m\rangle_\mathsf{W}
  \longmapsto |x\rangle_\mathsf{X}\bigl||x|\bigr\rangle_\mathsf{W}
  \label{eq:intro-hamming-map}
\end{equation}
exactly, with depth $O(\log n)$, size $O(n\log n)$, and total
width $n+m$. It uses arbitrary one-qubit gates and CNOTs with
all-to-all connectivity and no measurements. The output must
distinguish $n+1$ weights, while its least significant bit computes
parity and therefore requires depth $\Omega(\log n)$.
Thus both depth and width are optimal for this
binary-output interface. Table~\ref{tab:counting} summarizes
the relevant depth--workspace bounds.

\begin{table*}[t]
\centering
\caption{Selected constructions for exact binary Hamming-weight
computation with elementary one- and two-qubit gates, all-to-all
connectivity, and no measurements.
Additional workspace excludes the $n$ input qubits and the
$m=\lceil\log_2(n+1)\rceil$ prescribed output qubits.
All workspace is restored.
The carry-save row uses the standard reversible population-count
construction~\cite{Gossett1998}; see also
Ref.~\cite[Lemma~2.5]{ZiYuanNieZhang2026}.
The Takahashi--Tani row includes compute--copy--uncompute
to obtain the fully coherent interface, with $O(\log n)$
additional clean qubits.
For Zi et al.~\cite{ZiYuanNieZhang2026}, we choose
$B=2^{\sqrt{\log_2 n}}$ in their tradeoff, up to integer rounding.
The present circuit has size $O(n\log n)$.}
\label{tab:counting}
\begingroup
\small
\setlength{\tabcolsep}{7pt}
\renewcommand{\arraystretch}{1.2}
\begin{tabular*}{\textwidth}{@{\extracolsep{\fill}}lccc@{}}
\toprule
Construction & Clean workspace & Dirty workspace & Depth \\
\midrule
Reversible carry-save counting~\cite{Gossett1998}
  & $O(n)$ & $0$ & $O(\log n)$ \\
Takahashi and Tani~\cite{TakahashiTani2021}
  & $O(\log n)$ & $O(n\log^2 n)$ & $O(\log^2 n)$ \\
Zi, Nie, and Sun~\cite{ZiNieSun2025}
  & $0$ & $0$ & $O(\log^2 n)$ \\
Zi et al.~\cite{ZiYuanNieZhang2026}
  & $O\!\left(n\sqrt{\log_2 n}/2^{\sqrt{\log_2 n}}\right)$
  & $0$ & $O(\log n)$ \\
This work (Theorem~\ref{thm:hamming-weight})
  & $0$ & $0$ & $O(\log n)$ \\
\bottomrule
\end{tabular*}
\endgroup
\end{table*}

Our second result concerns symmetric Boolean functions, whose
values depend only on the input weight. These functions have
served as a setting for studying computation with limited
workspace~\cite{MaslovEtAl2021,TakahashiTani2021} and the
power of shallow quantum circuits~\cite{XuLi2026Fanout}.
For every symmetric Boolean function
$f:\{0,1\}^n\to\{0,1\}$, we construct an exact circuit implementing
\[
  |x\rangle|t\rangle
  \longmapsto |x\rangle|t\oplus f(x)\rangle,
  \qquad t\in\{0,1\},
\]
with depth $O(\log n)$ and size $O(n\log n)$, using only
$m=\lceil\log_2(n+1)\rceil$ ancillary qubits, all initialized
and returned to $|0\rangle$.
The depth is optimal in the worst case, since symmetric
Boolean functions include parity.
Our counting and Boolean-evaluation circuits both combine
tree structures with pipelining across tree levels.

We apply these circuits to the preparation of Dicke states,
the uniform superpositions of bit strings with a fixed weight.
In the measurement-assisted setting, measuring the counter's output
implements a Hamming-weight measurement using only $m$
ancillary qubits. Substituting it into the adaptive strategy of
Yu et al.~\cite{YuEtAl2026} prepares the balanced Dicke state
exactly for even $n$, with expected depth $O(\log^2 n)$.
Compared with implementing the same strategy using the earlier
counter~\cite{ZiNieSun2025}, this reduces the expected depth
from $O(\log^3 n)$ without increasing the ancillary space.
Here single-qubit measurements and resets have constant depth,
and classical feedforward latency is excluded.
With linear ancillary space,
Yang et al.~\cite{YangWeiYin2026SpinSqueezing} obtain
$O(\log n)$ expected depth under repetition; our result retains
logarithmic ancillary space.

Weight-selective reflections also connect the counter to
amplitude-amplification approaches to Dicke-state
preparation~\cite{PiroliStyliarisCirac2024,NagibSaffmanMolmer2025}.
In the unitary model, this gives exact preparation at depth
$O(r^{1/4}\log n)$,
where $r=\min\{k,n-k\}$ and $1\le k\le n-1$ is a
specified classical weight, using $m$ ancillary qubits that
are returned to zero. For balanced states, the resulting depth
$O(n^{1/4}\log n)$ improves the $O(n)$ bound of the
ancilla-free construction of Yuan and
Zhang~\cite{YuanZhang2025Dicke}, while total width increases
only from $n$ to $n+m=(1+o(1))n$.
Other constructions achieve polylogarithmic depth with larger
ancillary spaces~\cite{LiuChildsGottesman2026}.
The detailed comparisons are
given in Sec.~\ref{sec:dicke-states}.

Section~\ref{sec:preliminaries} specifies the circuit model.
Sections~\ref{sec:hamming-weight} and \ref{sec:symmetric-functions}
develop the Hamming-weight counter and circuits for symmetric Boolean functions, followed by the
Dicke-state applications in Sec.~\ref{sec:dicke-states} and
discussion in Sec.~\ref{sec:discussion}. The appendixes give
the detailed arithmetic and Boolean-evaluation schedules.

\section{Circuit model and notation}
\label{sec:preliminaries}

\paragraph{Circuit model and resources.}
Unless stated otherwise, we use exact unitary circuits with
all-to-all connectivity, built from arbitrary one-qubit gates
and CNOTs. Gates in a depth layer have pairwise disjoint
supports. For a circuit $U$, let $D(U)$ denote its depth and
$S(U)$ its number of elementary unitary gates. Width counts
all physical qubits; additional workspace excludes the
specified input and output registers. In our pipeline schedules,
a logical round consists of a constant number of depth layers.
Controlled phases, SWAPs, controlled-SWAPs, and controlled
$X$ gates with a fixed number of controls have exact
constant-size, constant-depth decompositions without ancillary
qubits~\cite{BarencoEtAl1995}. We use these gates as shorthand
and write $\operatorname{CCX}_{\mathsf{A},\mathsf{B}\to \mathsf{T}}$ for a Toffoli gate
with controls $\mathsf{A},\mathsf{B}$ and target $\mathsf{T}$.
The adaptive application in Sec.~\ref{subsec:dicke-adaptive}
also allows measurements and resets, with the depth convention
specified there; $S$ continues to count elementary unitary gates.

\paragraph{Workspace and exact implementation.}
A clean qubit starts in $|0\rangle$, and reusable clean workspace
is returned to zero after use. A prescribed output register need
not be restored: the weight register $\mathsf{W}$ is an output in counting
and temporary clean workspace in Boolean evaluation.
Dirty qubits may start in arbitrary states, including entanglement
with other systems. Borrowed workspace is taken from registers
already present and therefore does not increase the width.
At the end of each borrowing period, the net operation acts as
the identity on the borrowed subsystem, including when it is
initially entangled with other registers.
All stated computational-basis maps specify the full coherent
action, with no additional input-dependent phases; they extend
linearly to arbitrary superpositions and entangled inputs.

\paragraph{Binary notation.}
All logarithms are base two. We write $[n]=\{1,\ldots,n\}$
and $|x|=\sum_{i=1}^n x_i$ for the Hamming weight of
$x\in\{0,1\}^n$. An $m$-qubit word uses little-endian
bit positions $j=0,\ldots,m-1$, with integer label
\[
  z=\sum_{j=0}^{m-1}2^jz_j,
  \qquad N=2^m,
  \qquad \omega=e^{2\pi i/N}.
\]
We identify the bit string $(z_0,\ldots,z_{m-1})$ with
the integer $z$ when evaluating Boolean functions.
Word additions are modulo $N$, and $\oplus$ denotes addition
modulo two. Operator products act from right to left; explicit
gate sequences and circuit diagrams are read in execution order
from left to right. For a real angle $\theta$, we use
\[
  \CP(\theta)=\operatorname{diag}(1,1,1,e^{i\theta}).
\]

\paragraph{Fourier transforms.}
For binary weight readout, we use the Fourier transform
\[
  F_m|z\rangle
    =\frac{1}{\sqrt N}\sum_{q=0}^{N-1}\omega^{qz}|q\rangle.
\]
For the addition pipeline, it is convenient to use the
bit-reversed form, which displays the phase on each qubit:
\begin{align}
  |\phi_j(z)\rangle
    &:=\frac{|0\rangle+e^{2\pi iz/2^{j+1}}|1\rangle}{\sqrt2},
    \label{eq:fourier-qubit}\\
  \widetilde F_m|z\rangle
    &:=\bigotimes_{j=0}^{m-1}|\phi_j(z)\rangle,
    \label{eq:product-fourier}
\end{align}
where factor $j$ is assigned to qubit $j$. If
$\operatorname{Rev}_m$ reverses the order of the $m$ qubits,
then
\begin{equation}
  \widetilde F_m=\operatorname{Rev}_mF_m,
  \qquad F_m^{-1}=\widetilde F_m^{-1}\operatorname{Rev}_m.
  \label{eq:fourier-reversal}
\end{equation}
The reversal uses disjoint SWAPs, with depth $O(1)$ and size
$O(m)$. Both Fourier transforms and their inverses have exact
circuits of depth $O(m)$ and size $O(m^2)$ without ancillary
qubits~\cite{Draper2000}. Section~\ref{subsec:tree-mechanism}
uses the individual qubit phases to decode sums progressively;
Appendix~\ref{sec:simultaneous-addition} gives the gate schedule.

\section{Space-optimal counting in logarithmic depth}
\label{sec:hamming-weight}

Let \(\mathsf{X}=(\mathsf{X}_1,\ldots,\mathsf{X}_n)\) be the data register and put
\[
  m=\left\lceil\log_2(n+1)\right\rceil,
  \qquad N=2^m.
\]
The prescribed \(m\)-qubit output register is denoted by \(\mathsf{W}\).

\begin{theorem}[Exact Hamming-weight computation]
  \label{thm:hamming-weight}
  There is an exact circuit implementing
  \[
    |x\rangle_\mathsf{X}|0^m\rangle_\mathsf{W}
      \longmapsto
    |x\rangle_\mathsf{X}\bigl||x|\bigr\rangle_\mathsf{W}
  \]
  with depth \(O(\log n)\), size \(O(n\log n)\), and no qubits beyond
  \(\mathsf{X},\mathsf{W}\). Every circuit of one- and two-qubit gates implementing this
  map has depth at least \(\lceil\log_2 n\rceil\).
\end{theorem}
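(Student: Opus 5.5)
The lower bound comes from a light-cone argument, and the upper bound from an in-place carry-save compression followed by Fourier-basis accumulation into $\mathsf{W}$ and uncomputation.

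\textbf{Lower bound.} Let $U$ be any depth-$d$ circuit of one- and two-qubit gates implementing the map, and let $Z_0$ be the Pauli $Z$ on the least significant output qubit $\mathsf{W}_0$.
\begin{itemize}
\item In the Heisenberg picture, $U^\dagger Z_0 U$ is supported on the backward light cone of $\mathsf{W}_0$.
\item Each layer of two-qubit gates at most doubles this support, so the support has at most $2^d$ qubits.
\item On the other hand, $\langle x,0^m|U^\dagger Z_0U|x,0^m\rangle=(-1)^{|x|}$.
\item If some $\mathsf{X}_i$ were outside the support, this expectation would not depend on $x_i$, because the input is a product state. But flipping $x_i$ flips the sign.
\end{itemize}
Hence the support contains all $n$ input qubits, so $2^d\ge n$, which gives $d\ge\lceil\log_2 n\rceil$.

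\textbf{Upper bound: compression step.} The circuit is ``compress in place, add in the Fourier basis, uncompress''. The key observation is that a full adder can be made bijective on its three input qubits:
\[
(a,b,c)\mapsto(a\oplus b,\ a\oplus b\oplus c,\ \mathrm{maj}(a,b,c)).
\]
This is invertible because $a\oplus b$ and $c$ are recoverable, and then the majority fixes $a$ and $b$. It uses a constant number of CNOTs and Toffolis and no ancilla. The first output is a garbage bit of weight zero. The second is the sum bit, which stays in the same column. The third is the carry, promoted to the next column.

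Running a Wallace-tree (3:2 carry-save) reduction entirely inside $\mathsf{X}$ transforms $|x\rangle$ into a state $|y\rangle$. In $y$, a designated set of at most $2m$ qubits holds two binary numbers $A$ and $B$ with $A+B=|x|$; all other qubits are garbage. This takes $O(\log n)$ rounds and $O(n)$ gates, since each column shrinks by a factor $2/3$ per round. The standard Wallace analysis already handles the bookkeeping of carries between columns, so the depth stays $O(\log n)$.

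\textbf{Upper bound: accumulation step.}
\begin{enumerate}
\item Apply $\widetilde F_m$ to $\mathsf{W}$, with depth $O(m)$.
\item For every designated bit of weight $2^t$ and every $j\ge t$, apply $\CP(2\pi 2^{t}/2^{j+1})$ between that bit and $\mathsf{W}_j$. This multiplies factor $j$ of \eqref{eq:product-fourier} by $e^{2\pi i(A+B)/2^{j+1}}$.
\item These $O(m^2)$ commuting controlled phases form a bipartite interaction graph with maximum degree $O(m)$. By K\H{o}nig's edge-colouring theorem, they fit into $O(m)$ layers.
\item Apply $\widetilde F_m^{-1}$ and $\operatorname{Rev}_m$ as in \eqref{eq:fourier-reversal}, leaving $\bigl||x|\bigr\rangle$ on $\mathsf{W}$, with depth $O(m)$.
\item Run the compression circuit in reverse on $\mathsf{X}$ to restore $|x\rangle$.
\end{enumerate}
The result is depth $O(\log n)$, size $O(n+m^2)\subseteq O(n\log n)$, and no qubits beyond $\mathsf{X}$ and $\mathsf{W}$. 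Because every step is a permutation or a diagonal phase on basis states and all garbage is uncomputed, the coherent action on $|x\rangle|0^m\rangle$ is exactly the stated map, with no stray phases.

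\textbf{Main obstacle.} The delicate step is making the in-place Wallace reduction rigorous. Three things must be checked:
\begin{itemize}
\item that full-adder triples can always be formed with the available bits in each column, handling leftover bits with in-place half-adder variants;
\item that scheduling across columns keeps the depth $O(\log n)$ rather than $O(\log n\cdot m)$;
\item that the final two-row representation really has $O(m)$ designated bits, since $A$ and $B$ may overflow to width $m+1$, which is harmless modulo $N$.
\end{itemize}
I would first try a round-synchronous schedule in which all columns are reduced simultaneously. I would prove by induction that after $r$ rounds column $t$ holds at most $\max\{2,\,(2/3)^r\,n\,2^{-t}\cdot O(1)\}$ bits.
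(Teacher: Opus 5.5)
Your lower bound is correct; it is the light-cone argument behind the parity bound the paper cites. The upper bound, however, rests on a step that cannot exist.

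\textbf{The gadget is not a bijection.} The map $(a,b,c)\mapsto(a\oplus b,\ a\oplus b\oplus c,\ \mathrm{maj}(a,b,c))$ sends both $(1,0,0)$ and $(0,1,0)$ to $(1,1,0)$. When $a\oplus b=1$, the majority equals $c$ and says nothing about which of $a,b$ is set. No other three-wire gadget can fix this: the sum--carry pair $(1,0)$ has three preimages, and one garbage bit separates only two.

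\textbf{The obstruction is global.} Suppose any permutation, or unitary, of $\mathsf{X}$ leaves $|x|\bmod N$ determined by a designated subset $D$ of the qubits. The basis strings whose designated part decodes to $0$ span a space of dimension at least $2^{n-|D|}$. That space must equal the image of the one-dimensional weight-$0$ subspace, since $|x|\le n<N$ gives $|x|\equiv0$ only for $x=0^n$. Hence $|D|=n$, so nothing can be compressed in place. Your step~2 then degenerates to every input qubit phasing $\mathsf{W}_{m-1}$ directly, which costs depth $\Omega(n)$ on that single qubit. This is why reversible carry-save counting uses $O(n)$ clean ancillas, as in the first row of Table~\ref{tab:counting}. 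The obstacle you flag as delicate bookkeeping is an impossibility.

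\textbf{The missing idea.} The paper never compresses $\mathsf{X}$. Instead it uses the input as dirty \emph{copies} of the output register's value.
\begin{enumerate}
\item After $H^{\otimes m}$ on $\mathsf{W}$, it cuts the input into $m$-qubit blocks split among three batches. For each block in batch $a$, it borrows a dirty word from the other two batches.
\item Lemma~\ref{thm:simultaneous-addition} adds the output label $q$ to all these words at once, in depth $O(m+\log r)$. It does this with path-sum permutations on a binary tree, pipelined bit by bit in the product Fourier basis.
\item $m^2$ controlled phases per block imprint $\omega^{(b+q)|x_\mathsf{C}|}$. Undoing the addition and then the phase gives the commutator $\mathcal D_a^{-1}\mathcal M_a^{-1}\mathcal D_a\mathcal M_a$. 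It acts as $\omega^{q|x_{I_a}|}$, with the unknown dirty value $b$ cancelled and every borrowed qubit restored.
\item Three sequential batches followed by $F_m^{-1}$ yield $\bigl||x|\bigr\rangle$.
\end{enumerate}
So the fan-in to the small register is parallelized by fanning out $q$ into borrowed words, not by compressing the count.
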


The core ingredient is the following simultaneous modular addition
lemma.

\begin{lemma}[Simultaneous modular addition]
  \label{thm:simultaneous-addition}
  Let \(m,r\ge1\), and let \(\mathsf{Q},\mathsf{B}_1,\ldots,\mathsf{B}_r\) be pairwise disjoint
  \(m\)-qubit registers. There is an exact circuit
  \begin{equation}
    \MADD_{m,r}:\quad
    |q\rangle_\mathsf{Q}\bigotimes_{v=1}^r|b_v\rangle_{\mathsf{B}_v}
      \longmapsto
    |q\rangle_\mathsf{Q}\bigotimes_{v=1}^r|b_v+q\rangle_{\mathsf{B}_v},
    \label{eq:madd-map}
  \end{equation}
  where each target addition is modulo \(2^m\), with depth
  \(O(m+\log r)\), size \(O(rm^2)\), and no qubits beyond
  \(\mathsf{Q},\mathsf{B}_1,\ldots,\mathsf{B}_r\).
\end{lemma}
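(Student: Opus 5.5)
The plan is to reduce the simultaneous addition to a tree of two-target register additions, and then pipeline that tree using the per-qubit phase form $\widetilde F_m$ of Eq.~\eqref{eq:product-fourier}.

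The naive approach does not meet the depth bound. One would apply $\widetilde F_m$ to every $\mathsf{B}_v$ and then, for each $v$ and each $k\le j$, apply $\CP(2\pi 2^{k}/2^{j+1})$ between $q_k$ and qubit $j$ of $\mathsf{B}_v$. This uses size $O(rm^2)$, which is acceptable. However, each bit $q_k$ then controls $\Theta(rm)$ gates, so the depth is $\Omega(rm)$. Fanning $q$ out into copies would require workspace we do not have. We therefore let the targets themselves carry $q$ down a tree.

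\textbf{Difference coordinates.} Arrange $\mathsf{B}_1,\ldots,\mathsf{B}_r$ in a complete binary tree of depth $\lceil\log r\rceil$ with root $\mathsf{B}_1$, and let $p(v)$ denote the parent of $v$. Let $T$ be the invertible map
\[
  b_v\mapsto b_v-b_{p(v)}\quad(v\neq1),
\]
with the root unchanged and all arithmetic modulo $2^m$. Adding $q$ to every register is the same as conjugating "add $q$ to the root only" by $T$. The circuit is therefore $T^{-1}\circ\ADD_{\mathsf{Q}\to\mathsf{B}_1}\circ T$.

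$T$ is implemented leaf-to-root. At each tree level, every parent register is subtracted from its (at most two) children. These are Draper adders in which each source has at most two targets, so the fan-out per control is constant. $T^{-1}$ is the same schedule run top-down, with $b_v\mathrel{+}=b_{p(v)}$.

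Carried out sequentially, each level costs $O(m)$ depth, giving $O(m\log r)$ in total. The size is $O(m^2)$ per edge, so $O(rm^2)$ overall. The single Draper addition from $\mathsf{Q}$ into $\mathsf{B}_1$ costs $O(m)$ depth.

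\textbf{Pipelining, the main obstacle.} The target is $O(m+\log r)$ depth, so successive levels must overlap, with each level starting $O(1)$ rounds after its parent level. The plan is to keep every register in the $\widetilde F_m$ basis throughout $T^{-1}$. The transforms are applied to all registers once at the start and inverted once at the end, costing $O(m)$ depth in parallel.

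In this basis, the addition $b_v\mathrel{+}=b_{p(v)}$ is made of controlled phases driven by the \emph{computational-basis} bits of the parent. These bits are not directly available, because the parent is itself in Fourier form. The plan is to decode the parent progressively: its qubit $j$ carries the phase $e^{2\pi i b/2^{j+1}}$, and once the lower bits are known, qubit $j$ can be reduced to bit $j$ by a Hadamard. Processing bit positions in a staircase in $j$ lets bit $j$ of a parent be released one round after the child's lower bits are fed. The child then uses that bit immediately and re-encodes it.

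Scheduled this way, position $j$ at depth $\ell$ of the tree is handled at round $O(j+\ell)$, which gives total depth $O(m+\log r)$. The hardest part is to check that the per-round gate sets have disjoint supports, since each parent feeds two children and is also re-encoded. I would resolve this by splitting every round into a constant number of sublayers, one each for the left child, the right child, and the self-phases. The remaining check is that all phase corrections are exact, so that the borrowed Fourier phases return to the identity and no extra input-dependent phase is introduced.
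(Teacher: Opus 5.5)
Your reduction matches the paper's: $T^{-1}\circ\ADD_{\mathsf{Q}\to\mathsf{B}_1}\circ T$ is the paper's $\Pi_{\mathcal T}\Pi_{\mathcal T'}^{-1}$, with $\mathsf{Q}$ given the single child $\mathsf{B}_1$. Progressive decoding of $\widetilde F_m$, with a constant number of substeps per round, is also the paper's mechanism.

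The gap is that the pipelining is asserted rather than scheduled. That pipelining is the whole content of the lemma beyond the easy $O(m\log r)$ bound, and the fragment you describe would not give constant-depth rounds. Once bit $i$ of a word is decoded, it must place a phase on qubits $i,\ldots,m-1$ of each child and a correction on each of its own qubits $i+1,\ldots,m-1$. Conversely, qubit $j$ must receive $j+1$ addition phases and $j$ corrections before its Hadamard. If ``the child uses that bit immediately,'' $\Theta(m)$ gates share one control in a single round. Splitting rounds into left-child, right-child and self sublayers does not help, because these collisions occur inside a single sublayer.

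The missing idea is the anti-diagonal schedule. The phase from control position $i$ (in $p(v)$ or in $v$ itself) to qubit $j$ of word $v$ is applied in round $d(v)+i+j$, and the Hadamard on that qubit in round $d(v)+2j$. This gives four properties:
\begin{itemize}
\item In round $t$, the pairs on a given word or edge satisfy $i+j=t-d(v)$, so they form a matching.
\item Every control has been decoded at least one round before use; for a parent bit the lead is $1+j-i$ rounds.
\item The only phase reaching qubit $j$ of $v$ in its Hadamard round is the last one, from bit $j$ of $p(v)$, and it is placed before the Hadamard.
\item A qubit receives addition phases only in rounds $\le d(v)+2j$ but sends them to children only in rounds $\ge d(v)+2j+1$.
\end{itemize}
The last property is what keeps the child substeps disjoint even though adjacent edges share a word.

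Two further corrections. First, drop the re-encoding and the final global inverse. The progressive decoder \emph{is} $\widetilde F_m^{-1}$, so each nonroot word should be Fourier-encoded once and decoded once, ending with its path sum in computational encoding. Re-encoding a bit right after a child uses it would remove it as a control while it is still needed for the next $\Theta(m)$ rounds. Re-encoding qubit $j$ would also require the lower bits to be in computational encoding.

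Second, you pipeline only $T^{-1}$; as written, $T$ still costs $O(m\log r)$. You get $T$ at the same cost by running the pipelined $T^{-1}$ circuit in reverse, exactly as the paper obtains $\Pi_{\mathcal T'}^{-1}$.
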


The role of simultaneous modular addition in Hamming-weight
computation is not immediately apparent from its statement.
Computing the Hamming weight requires the small output register
\(\mathsf{W}\) to collect information from all \(n\) input qubits.
Achieving logarithmic depth without additional qubits therefore
poses a challenge for parallelizing this information collection.

Lemma~\ref{thm:simultaneous-addition} provides the key primitive:
in this application its source $\mathsf{Q}$ is the output register $\mathsf{W}$.
It allows us to add that register simultaneously to many
dirty word registers borrowed from the input.
Our construction uses these borrowed registers to collect
contributions from different input blocks in parallel and combine
them in the output, while restoring all borrowed qubits.
Thus the input itself supplies the temporary registers needed
for parallel information collection.

We first show how Lemma~\ref{thm:simultaneous-addition} implies
Theorem~\ref{thm:hamming-weight}, and then construct the simultaneous
modular adder. Its complete gate schedule is given in
Appendix~\ref{sec:simultaneous-addition}.
Figure~\ref{fig:counting} summarizes the counting circuit and its
restoration mechanism.

\subsection{Hamming-weight computation from simultaneous addition}
\label{subsec:counting-from-addition}

\begin{figure*}[t]
\centering
\resizebox{\textwidth}{!}{%
\begin{tikzpicture}[x=1cm,y=1cm]
\node[flabel,anchor=west] at (0,.45)
 {\textbf{(a)} Exact counting on $n+m$ qubits};

\draw[qwire] (1.25,0)--(15.4,0);
\draw[qwire] (1.25,-1)--(15.4,-1);

% Register-width markers.
\foreach \yy/\width in {0/n,-1/m}{
 \draw[line width=.6pt]
   (1.40,\yy-.10)--(1.54,\yy+.10);
 \node[anchor=south,font=\scriptsize,inner sep=1pt]
   at (1.47,\yy+.11) {$\width$};
}

\node[anchor=east,font=\small] at (1.15,0)
 {$\mathsf{X}:\ |x\rangle$};
\node[anchor=east,font=\small] at (1.15,-1)
 {$\mathsf{W}:\ |0^m\rangle$};

\node[qbox] at (2.45,-1) {$H^{\otimes m}$};
\foreach \xx/\aa in {4.2/0,7.0/1,9.8/2}{
 \node[qbox,minimum width=1.7cm,minimum height=1.55cm]
   at (\xx,-.5)
   {$\mathcal E_{\aa}$\\[-1pt]\scriptsize collect batch $I_{\aa}$};
}
\node[qbox] at (12.6,-1) {$F_m^{-1}$};

\node[anchor=west,font=\small] at (15.45,0)
 {$|x\rangle$};
\node[anchor=west,font=\small] at (15.45,-1)
 {$\bigl||x|\bigr\rangle$};

\node[flabel,anchor=west] at (0,-2.3)
 {\textbf{(b)} One block: translate, phase, restore, cancel};

\foreach \yy/\lab in {0/{\mathsf{C}:\ x_\mathsf{C}},-.8/{\mathsf{B}:\ b},-1.6/{\mathsf{W}:\ q}}{
 \draw[qwire] (1.5,\yy-3)--(14.8,\yy-3);
 \draw[line width=.6pt]
   (1.88,\yy-3.10)--(2.02,\yy-2.90);
 \node[anchor=south,font=\scriptsize,inner sep=1pt]
   at (1.95,\yy-2.89) {$m$};
 \node[anchor=east,font=\small] at (1.4,\yy-3)
   {$\lab$};
}

\node[qbox,minimum width=1.5cm,minimum height=1.22cm]
 at (3.3,-4.2) {$\ADD_{\mathsf{W}\to \mathsf{B}}$};
\node[dbox,minimum width=1.5cm,minimum height=1.22cm]
 at (6,-3.4) {$\mathcal D_{\mathsf{C},\mathsf{B}}$};
\node[qbox,minimum width=1.5cm,minimum height=1.22cm]
 at (9,-4.2) {$\ADD_{\mathsf{W}\to \mathsf{B}}^{-1}$};
\node[dbox,minimum width=1.5cm,minimum height=1.22cm]
 at (12,-3.4) {$\mathcal D_{\mathsf{C},\mathsf{B}}^{-1}$};

\node[font=\small] at (4.5,-5.15)
 {$b\mapsto b+q$};
\node[font=\small] at (7.5,-5.15)
 {$\omega^{(b+q)|x_\mathsf{C}|}$};
\node[font=\small] at (10.5,-5.15)
 {$b+q\mapsto b$};
\node[font=\small] at (13.5,-5.15)
 {$\omega^{-b|x_\mathsf{C}|}$};
\node[anchor=west,font=\small,align=left] at (15,-3.8)
 {$\mathsf{B}$ restored\\phase $\omega^{q|x_\mathsf{C}|}$};
\end{tikzpicture}%
}
\caption{Coherent counting using the input as temporary workspace.
(a) A Hadamard layer prepares the output, three batches collect the
weight-dependent phase, and an inverse Fourier transform reads out
the binary weight. Each batch borrows registers from its complement
and restores them before the next batch.
(b) The input block contains the string $x_\mathsf{C}$, the dirty word has
integer label $b$, and the output has integer label $q$.
Here $\ADD_{\mathsf{W}\to \mathsf{B}}$ adds the output to the dirty word modulo $N$.
The two phase operations cancel the dependence on $b$, leaving
$\omega^{q|x_\mathsf{C}|}$. For a full batch, $\mathcal M_a$ replaces
$\ADD_{\mathsf{W}\to \mathsf{B}}$ and translates all borrowed words simultaneously.}
\label{fig:counting}
\end{figure*}

We prove Theorem~\ref{thm:hamming-weight} assuming
Lemma~\ref{thm:simultaneous-addition}.

\paragraph{The Fourier target.}
Starting with \(|0^m\rangle_\mathsf{W}\), apply \(H^{\otimes m}\) to prepare
a uniform superposition. For a data basis state \(|x\rangle_\mathsf{X}\), our
goal is to collect the phase
\begin{equation}
  |x\rangle_\mathsf{X}\frac{1}{\sqrt N}\sum_{q=0}^{N-1}|q\rangle_\mathsf{W}
  \longmapsto
  |x\rangle_\mathsf{X}\frac{1}{\sqrt N}\sum_{q=0}^{N-1}
    \omega^{q|x|}|q\rangle_\mathsf{W}.
  \label{eq:hamming-fourier-target}
\end{equation}
The output state on the right is \(F_m\bigl||x|\bigr\rangle_\mathsf{W}\).
Applying \(F_m^{-1}\) therefore gives the binary weight.
Since \(|x|\le n<N\), no modular ambiguity occurs. It remains to
implement the phase in Eq.~\eqref{eq:hamming-fourier-target} using
only \(\mathsf{X},\mathsf{W}\).

\paragraph{A three-register example.}
Consider three disjoint \(m\)-qubit registers: the output \(\mathsf{W}\), a
dirty word \(\mathsf{B}\), and an input block \(\mathsf{C}\). Write their basis labels
as \(q,b,x_\mathsf{C}\), respectively. Here \(q\) and \(b\) are binary
integers, while \(x_\mathsf{C}\) is the original input string and \(|x_\mathsf{C}|\)
is its Hamming weight. We first construct an operation that leaves
these labels unchanged and contributes the phase
\(\omega^{q|x_\mathsf{C}|}\).

Let \(\ADD_{\mathsf{W}\to \mathsf{B}}\) add \(q\) to \(b\) modulo \(N\)
(the \(r=1\) case of Lemma~\ref{thm:simultaneous-addition}).
To obtain a phase depending on the contents of \(\mathsf{B}\) and \(\mathsf{C}\),
number their qubits by \(0,\ldots,m-1\) and define
\begin{equation}
  \mathcal D_{\mathsf{C},\mathsf{B}}:=
  \prod_{i=0}^{m-1}\prod_{j=0}^{m-1}
  \CP_{\mathsf{C}_i,\mathsf{B}_j}\!\left(\frac{2\pi 2^j}{N}\right).
  \label{eq:block-phase-layer}
\end{equation}
Each occupied input qubit contributes \(\omega^b\), so
\begin{equation}
  \mathcal D_{\mathsf{C},\mathsf{B}}|x_\mathsf{C}\rangle_\mathsf{C}|b\rangle_\mathsf{B}
    =\omega^{b|x_\mathsf{C}|}|x_\mathsf{C}\rangle_\mathsf{C}|b\rangle_\mathsf{B}.
  \label{eq:block-phase-action}
\end{equation}
Thus the input bits contribute directly, without first computing
their weight in a separate register.

Apply
\(\ADD_{\mathsf{W}\to \mathsf{B}};\mathcal D_{\mathsf{C},\mathsf{B}};\ADD_{\mathsf{W}\to \mathsf{B}}^{-1};\mathcal D_{\mathsf{C},\mathsf{B}}^{-1}\),
in execution order from left to right, as shown in
Fig.~\ref{fig:counting}(b). In register order \(\mathsf{C},\mathsf{B},\mathsf{W}\), the
successive states are
\begin{equation}
  \begin{aligned}
  &|x_\mathsf{C},b,q\rangle\\
  \xrightarrow{\ADD_{\mathsf{W}\to \mathsf{B}}}\quad
  &|x_\mathsf{C},[b+q]_N,q\rangle\\
  \xrightarrow{\mathcal D_{\mathsf{C},\mathsf{B}}}\quad
  &\omega^{[b+q]_N|x_\mathsf{C}|}|x_\mathsf{C},[b+q]_N,q\rangle\\
  \xrightarrow{\ADD_{\mathsf{W}\to \mathsf{B}}^{-1}}\quad
  &\omega^{[b+q]_N|x_\mathsf{C}|}|x_\mathsf{C},b,q\rangle\\
  \xrightarrow{\mathcal D_{\mathsf{C},\mathsf{B}}^{-1}}\quad
  &\omega^{q|x_\mathsf{C}|}|x_\mathsf{C},b,q\rangle,
  \end{aligned}
  \label{eq:one-block-cancellation}
\end{equation}
where \([b+q]_N\) denotes the residue modulo \(N\). The last step
uses \(\omega^N=1\): the two phases multiply to
\(\omega^{([b+q]_N-b)|x_\mathsf{C}|}=\omega^{q|x_\mathsf{C}|}\).
The borrowed word returns to \(b\), and its unknown initial contents
no longer appear in the net phase.

The phase circuit contains \(m^2\) controlled-phase gates. In layer
\(t=0,\ldots,m-1\), pair \(\mathsf{C}_i\) with \(\mathsf{B}_{(i+t)\bmod m}\).
Each layer is a matching, and the \(m\) layers cover every pair.
The same construction works for a block of \(s\le m\) input qubits
by retaining only \(i=0,\ldots,s-1\). Consequently,
\begin{equation}
  D(\mathcal D_{\mathsf{C},\mathsf{B}})=O(m),\qquad S(\mathcal D_{\mathsf{C},\mathsf{B}})=O(sm),
  \label{eq:block-phase-resources}
\end{equation}
with no additional qubits. The complete four-step operation has
depth \(O(m)\) and size \(O(m^2)\).

\paragraph{Parallel collection from the full input.}
For sufficiently large \(n\), split the input into blocks of \(m\) qubits,
with a final shorter block if necessary. Distribute the complete
blocks as evenly as possible among three batches, and assign the
short block, if present, to a batch with the fewest complete blocks.
Let \(I_a\) denote the input indices in batch \(a\), so that
\begin{equation}
  [n]=I_0\mathbin{\dot\cup}I_1\mathbin{\dot\cup}I_2.
  \label{eq:three-batches}
\end{equation}
Write the blocks in batch \(a\) as
\(\mathsf{C}_{a,1},\ldots,\mathsf{C}_{a,r_a}\).
The other two batches provide enough complete \(m\)-qubit blocks
to borrow disjoint words \(\mathsf{B}_{a,1},\ldots,\mathsf{B}_{a,r_a}\), one for
each source block. The short block is used only as a source,
never as a borrowed word.

Let \(\mathcal M_a\) be the circuit of
Lemma~\ref{thm:simultaneous-addition} with source \(\mathsf{W}\) and these
\(r_a\) target words, and set
\begin{equation}
  \begin{aligned}
    \mathcal D_a&:=\prod_{g=1}^{r_a}\mathcal D_{\mathsf{C}_{a,g},\mathsf{B}_{a,g}},\\
    \mathcal E_a&:=\mathcal D_a^{-1}\mathcal M_a^{-1}\mathcal D_a\mathcal M_a.
  \end{aligned}
  \label{eq:batch-cancellation}
\end{equation}
The factors in \(\mathcal D_a\) act on disjoint registers and run in parallel.
Applying the preceding cancellation to each block gives
\begin{equation}
  \mathcal E_a|x\rangle_\mathsf{X}|q\rangle_\mathsf{W}
  =\omega^{q\sum_{i\in I_a}x_i}|x\rangle_\mathsf{X}|q\rangle_\mathsf{W}.
  \label{eq:batch-operator}
\end{equation}
Thus one batch contributes its weight to the output phase and
restores all borrowed input qubits. The three batches are processed
sequentially, so their borrowed words may be reused.
Their phases multiply to \(\omega^{q|x|}\), establishing
Eq.~\eqref{eq:hamming-fourier-target}. The complete counting circuit
is therefore
\begin{equation}
  (F_m^{-1})_\mathsf{W} \mathcal E_2 \mathcal E_1 \mathcal E_0 H_\mathsf{W}^{\otimes m},
  \label{eq:hamming-readout}
\end{equation}
as summarized in Fig.~\ref{fig:counting}(a).

\paragraph{Resource bounds.}
For each batch, the simultaneous additions and the parallel phase
circuits give
\[
  D(\mathcal E_a)=O(m+\log r_a),\qquad S(\mathcal E_a)=O(r_am^2).
\]
There are three sequential batches and
\(\sum_a r_a\le n/m+3\). The Hadamard preparation has depth one,
and the inverse Fourier transform has depth \(O(m)\) and size
\(O(m^2)\). Hence
\begin{equation}
  \begin{aligned}
    D&=O\!\left(m+\max_a\log r_a\right)=O(\log n),\\
    S&=O\!\left(m^2\sum_a r_a\right)=O(n\log n).
  \end{aligned}
  \label{eq:hamming-resources}
\end{equation}
Only the \(n\) input qubits and the \(m\) prescribed output qubits
are used. Finally, the least significant output bit is the parity
of all \(n\) inputs, so the standard depth lower bound gives
\(D\ge\lceil\log_2 n\rceil\)~\cite[Lemma~3.1]{FangEtAl2006}.
This completes the proof of Theorem~\ref{thm:hamming-weight}.

\subsection{Simultaneous addition by a Fourier pipeline}
\label{subsec:tree-mechanism}

We now construct the simultaneous adder of
Lemma~\ref{thm:simultaneous-addition}. The construction has two
ingredients. First, two traversals of a binary tree add the source
value to every target while canceling the targets' initial values
from the propagation. Second, a Fourier implementation allows
successive tree levels to work on different bits at the same time.
This reduces the depth from \(O(m\log r)\) for a level-by-level
implementation to \(O(m+\log r)\).

\paragraph{Simultaneous addition by two tree traversals.}

\begin{figure*}[t]
\centering
\resizebox{\textwidth}{!}{%
\begin{tikzpicture}[x=1cm,y=1cm,
  word/.style={draw=inkblue,line width=.6pt,fill=lightblue,
    minimum width=1.30cm,minimum height=.52cm,inner sep=2pt,
    font=\fontsize{9}{11}\selectfont},
  subtractword/.style={word,draw=amber,fill=lightamber},
  treeedge/.style={draw=black!45,line width=.55pt},
  sweep/.style={-{Stealth[length=2mm]},line width=.65pt},
  paneltitle/.style={font=\fontsize{10}{12}\selectfont,anchor=south}]
\foreach \shift/\stage in {0/0,6.2/1,12.4/2}{
  \begin{scope}[xshift=\shift cm]
  \ifnum\stage=0
    \node[paneltitle] at (2.7,.58) {\textbf{(a)} Input};
    \node[word] (q\stage) at (2.7,0) {$q$};
    \node[word] (b1\stage) at (1.2,-1.18) {$b_1$};
    \node[word] (b2\stage) at (4.2,-1.18) {$b_2$};
    \foreach \xx/\vv in {.45/3,1.95/4,3.45/5,4.95/6}
      \node[word] (b\vv\stage) at (\xx,-2.36) {$b_{\vv}$};
    \foreach \vv in {1,2}
      \draw[treeedge] (q\stage.south)--(b\vv\stage.north);
    \foreach \pp/\vv in {1/3,1/4,2/5,2/6}
      \draw[treeedge] (b\pp\stage.south)--(b\vv\stage.north);
  \fi
  \ifnum\stage=1
    \node[paneltitle] at (2.7,.58)
      {\textbf{(b)} Bottom-up subtraction};
    \node[word] (q\stage) at (2.7,0) {$q$};
    \node[word] (b1\stage) at (1.2,-1.18) {$b_1$};
    \node[word] (b2\stage) at (4.2,-1.18) {$b_2$};
    \foreach \xx/\vv/\pp in {
      .45/3/1,1.95/4/1,3.45/5/2,4.95/6/2}
      \node[subtractword] (b\vv\stage) at (\xx,-2.36)
        {$b_{\vv}-b_{\pp}$};
    \foreach \vv in {1,2}
      \draw[treeedge,dashed,black!40]
        (q\stage.south)--(b\vv\stage.north);
    \foreach \pp/\vv/\offset in {
      1/3/-2,1/4/2,2/5/-2,2/6/2}
      \draw[sweep,draw=amber]
        (b\vv\stage.north)--
        ([xshift=\offset pt]b\pp\stage.south);
    \node[font=\fontsize{9}{11}\selectfont] at (2.7,-2.92)
      {Root edges are skipped.};
  \fi
  \ifnum\stage=2
    \node[paneltitle] at (2.7,.58)
      {\textbf{(c)} Add from the root};
    \node[word] (q\stage) at (2.7,0) {$q$};
    \node[word] (b1\stage) at (1.2,-1.18) {$b_1+q$};
    \node[word] (b2\stage) at (4.2,-1.18) {$b_2+q$};
    \foreach \xx/\vv in {.45/3,1.95/4,3.45/5,4.95/6}
      \node[word] (b\vv\stage) at (\xx,-2.36) {$b_{\vv}+q$};
    \foreach \vv in {1,2}
      \draw[sweep,draw=inkblue]
        (q\stage.south)--(b\vv\stage.north);
    \foreach \pp/\vv in {1/3,1/4,2/5,2/6}
      \draw[sweep,draw=inkblue]
        (b\pp\stage.south)--(b\vv\stage.north);
  \fi
  \end{scope}
}
\draw[sweep,draw=black!60,line width=.8pt]
  (5.25,-1.18)--(6.35,-1.18);
\draw[sweep,draw=black!60,line width=.8pt]
  (11.45,-1.18)--(12.55,-1.18);
\end{tikzpicture}%
}
\caption{Tree construction of simultaneous addition, illustrated with
six target words. Each box shows the current integer label of an
$m$-qubit register; the root is the source $\mathsf{Q}$ with label $q$.
All arithmetic is modulo $2^m$.
(a) The targets initially contain $b_1,\ldots,b_6$.
(b) Working from the leaves upward, subtract each target parent's value
from its child. The two edges out of $\mathsf{Q}$ are skipped; dashed edges mark
this exception, and upward arrows indicate the traversal order.
(c) Working from the root downward, add each parent's current value
to its child. Every target then contains $b_v+q$, while $\mathsf{Q}$ is unchanged.}
\label{fig:tree-addition}
\end{figure*}

Arrange the \(m\)-qubit registers \(\mathsf{Q},\mathsf{B}_1,\ldots,\mathsf{B}_r\) on a
complete binary tree \(\mathcal T\), with \(\mathsf{Q}\) at the root.
Write \(h\) for its height and \(p(v)\) for the parent of a
target vertex \(v\). The initial register labels are \(q\) at
the root and \(b_v\) at target \(\mathsf{B}_v\). All arithmetic below is
modulo \(N=2^m\).

Simply adding each parent to its children from the root downward
does not give the desired map: a deeper target receives not only
\(q\), but also the initial values of the targets along its path.
We remove these unwanted contributions by preprocessing the targets.
Starting at the leaves and working upward, subtract each target
parent from its child, skipping the edges out of \(\mathsf{Q}\).
The first target level retains its initial values, while every
deeper target becomes \(b_v-b_{p(v)}\). The upward order ensures
that each subtraction uses the parent's initial value, before
that parent is modified.

Now work from the root downward, adding each parent's current
value to its child. The first target level becomes \(b_v+q\).
At every deeper vertex, the parent has already acquired its final
value \(b_{p(v)}+q\), so the update is
\begin{equation}
  (b_v-b_{p(v)})+(b_{p(v)}+q)=b_v+q.
  \label{eq:main-tree-cancellation}
\end{equation}
Thus every target receives exactly \(q\), and \(\mathsf{Q}\) is unchanged.
Figure~\ref{fig:tree-addition} illustrates these two traversals.

To implement both traversals efficiently, it suffices to construct
the downward addition circuit. On any rooted tree, this circuit
replaces each word by the sum of the initial values along the path
from the root to that word. We call this the \emph{path-sum
permutation} and denote it by \(\Pi_{\mathcal T}\) for the full
tree. Removing \(\mathsf{Q}\) leaves a forest \(\mathcal T'\) of at most
two target trees. Let \(\Pi_{\mathcal T'}\) apply the path-sum
permutation to these trees separately, leaving \(\mathsf{Q}\) unchanged.
Reversing this circuit gives precisely the upward subtraction
used for preprocessing. Hence
\begin{equation}
  \MADD_{m,r}=\Pi_{\mathcal T}\Pi_{\mathcal T'}^{-1}.
  \label{eq:main-madd}
\end{equation}
A fast path-sum circuit therefore supplies both parts of the
simultaneous adder, with the same depth and size bounds for the
inverse circuit.

\paragraph{Fourier addition one bit at a time.}
Using an \(O(m)\)-depth Fourier adder separately at each tree
level would give depth \(O(mh)\). To overlap the levels, we use
the fact that output bit \(j\) of a sum depends only on input
bits \(0,\ldots,j\). A parent's low bits can therefore enter
its children's additions while its higher bits are still being
decoded.

Consider a parent value \(a\) and an initial child value \(b\),
with \(s=a+b\pmod N\). Denote the binary digits of \(s\) by \(s_i\).
Encode the child as \(\widetilde F_m|b\rangle\). By
Eq.~\eqref{eq:fourier-qubit}, its qubit \(j\) has relative phase
\(e^{2\pi i b/2^{j+1}}\).
For each \(0\le i\le j\), the addition uses a controlled phase
of angle \(2\pi 2^i/2^{j+1}\) from parent bit \(i\) to child
qubit \(j\). Their combined contribution is
\(e^{2\pi i a/2^{j+1}}\), giving
\[
  e^{2\pi i b/2^{j+1}}e^{2\pi i a/2^{j+1}}
  =e^{2\pi i s/2^{j+1}}.
\]
Parent bits above \(j\) are unnecessary because their phase
contributions are integer multiples of \(2\pi\).

Suppose the lower sum bits \(s_0,\ldots,s_{j-1}\) have already
been decoded into computational encoding.
For each \(i<j\), use sum bit \(i\) to control a phase of angle
\(-2\pi 2^i/2^{j+1}\) on qubit \(j\).
These internal corrections remove the lower sum bits' contribution,
leaving relative phase
\begin{equation}
  \exp\!\left[
    \frac{2\pi i}{2^{j+1}}
    \left(s-\sum_{i<j}2^is_i\right)
  \right]=(-1)^{s_j}.
  \label{eq:main-bit-decoding}
\end{equation}
A Hadamard then maps
\((|0\rangle+(-1)^{s_j}|1\rangle)/\sqrt2\) to \(|s_j\rangle\).
Starting with \(j=0\), which needs no lower-bit correction,
this decodes the sum from low bits to high bits.

The addition phases and internal corrections can be interleaved
as their control bits become available; all required phases on
qubit \(j\) must precede its Hadamard.
Once decoded, bit \(j\) can correct bits \(j+1,j+2,\ldots\)
of its own word and supply addition phases to bits
\(j,j+1,\ldots\) of each child.
Thus the same decoded bit advances both the decoding within its
word and the addition in the next tree level.

\paragraph{Overlapping the tree levels.}

\begin{figure*}[t]
\centering
\resizebox{\textwidth}{!}{%
\begin{tikzpicture}[x=1cm,y=1cm,
  registerlabel/.style={
    font=\fontsize{10}{12}\selectfont,anchor=south},
  decodeedge/.style={-{Stealth[length=2mm]},
    draw=inkblue,line width=.6pt}]
% Four horizontal bands per register; bit 0 is at the bottom.
% At the end of round t=5, the decoded counts at depths 1,2,3
% are 3,2,2, since bit j at depth d is decoded at t=d+2j.
% The root is in computational encoding from the start.
\foreach \xx/\yy/\vv/\ready in {
  7.7/0/0/4,
  3.3/-1.50/1/3,12.1/-1.50/2/3,
  1.1/-3.00/3/2,5.5/-3.00/4/2,
  9.9/-3.00/5/2,14.3/-3.00/6/2,
  0/-4.50/7/2,2.2/-4.50/8/2,
  4.4/-4.50/9/2,6.6/-4.50/10/2,
  8.8/-4.50/11/2,11.0/-4.50/12/2,
  13.2/-4.50/13/2,15.4/-4.50/14/2}{
  \coordinate (reg\vv) at (\xx,\yy);
  \coordinate (top\vv) at (\xx,\yy+.36);
  \coordinate (bottom\vv) at (\xx,\yy-.36);

  \fill[lightamber]
    (\xx-.79,\yy-.36) rectangle (\xx+.79,\yy+.36);
  \fill[inkblue!35!white]
    (\xx-.79,\yy-.36)
    rectangle (\xx+.79,\yy-.36+.18*\ready);
  \draw[draw=inkblue,line width=.6pt]
    (\xx-.79,\yy-.36) rectangle (\xx+.79,\yy+.36);

  \foreach \bit in {1,2,3}{
    \ifnum\bit<\ready
      \draw[inkblue!55,line width=.35pt]
        (\xx-.79,\yy-.36+.18*\bit)--
        (\xx+.79,\yy-.36+.18*\bit);
    \else
      \draw[amber!55,line width=.35pt]
        (\xx-.79,\yy-.36+.18*\bit)--
        (\xx+.79,\yy-.36+.18*\bit);
    \fi
  }

  \ifnum\vv=0
    \node[registerlabel] at (\xx,\yy+.40) {$\mathsf{Q}$};
  \else
    \ifodd\vv
      \node[registerlabel,anchor=south east]
        at (\xx-.15,\yy+.40) {$\mathsf{B}_{\vv}$};
    \else
      \node[registerlabel,anchor=south west]
        at (\xx+.15,\yy+.40) {$\mathsf{B}_{\vv}$};
    \fi
  \fi
}
\foreach \parent/\child in {
  0/1,0/2,1/3,1/4,2/5,2/6,
  3/7,3/8,4/9,4/10,5/11,5/12,6/13,6/14}{
  \draw[decodeedge] (bottom\parent)--(top\child);
}

\node[font=\fontsize{10}{12}\selectfont,anchor=west]
  at (-.79,.88) {After logical round $t=5$ ($m=4$)};

\draw[draw=inkblue,line width=.55pt,fill=inkblue!35!white]
  (.45,-5.55) rectangle (.83,-5.30);
\node[font=\fontsize{10}{12}\selectfont,anchor=west]
  at (1.02,-5.425) {Bits available as controls};

\draw[draw=amber,line width=.55pt,fill=lightamber]
  (9.15,-5.55) rectangle (9.53,-5.30);
\node[font=\fontsize{10}{12}\selectfont,anchor=west]
  at (9.72,-5.425) {Bits still in Fourier form};
\end{tikzpicture}%
}
\caption{The tree pipeline after logical round $t=5$ for
four-qubit words. Each box has one band per bit, with bit $0$ at
the bottom. Blue bands are in computational encoding and can
control phase gates; pale orange bands remain in Fourier form.
The root $\mathsf{Q}$ is available from the start. The first, second, and
third target levels have decoded three, two, and two bits,
respectively. Adjacent nonroot levels alternate their Hadamards:
the first and third target levels have just decoded a new bit in
round $5$, whereas the second target level last decoded a bit
in round $4$.
The exact schedule is proved in Appendix~\ref{sec:simultaneous-addition}.}
\label{fig:pipeline}
\end{figure*}

For the full-tree path-sum circuit, first apply \(\widetilde F_m\)
to all target words in parallel, leaving \(\mathsf{Q}\) in computational
encoding. This initialization has depth \(O(m)\).
Number subsequent logical rounds from \(t=0\).
Each round has four substeps: internal corrections, addition
phases to left children, addition phases to right children,
and finally the scheduled Hadamards.
The assignment in Appendix~\ref{sec:simultaneous-addition}
makes the gates within each substep disjoint, so every round
has constant depth.

The schedule alternates Hadamards between neighboring nonroot
levels: each child decodes bit \(j\) one round after its parent,
and successive bits of the same word are decoded two rounds apart.
To see how the rounds connect, suppose bit \(j<m-1\) of a word
has just been decoded.
Its next bit has already received all contributions from control
positions below \(j\).
In the next round, it corrects that word's bit \(j+1\) and adds
to bit \(j\) of each child. The word also receives the parent's
bit-\(j\) contribution to its bit \(j+1\).
In that same round, a nonroot parent decodes its bit \(j+1\).
One round later, this parent bit supplies the final addition
phase needed for the word's bit \(j+1\), which then undergoes
its own Hadamard. If the parent is \(\mathsf{Q}\), its bits are already
available and the same schedule applies.
Previously decoded bits continue supplying phases, moving their
target positions one step higher each round as this cycle repeats.
Figure~\ref{fig:pipeline} shows the resulting overlap at the
end of round \(5\) for four-bit words.

A word at depth \(d\ge1\) starts by decoding bit \(0\) in
round \(d\), then decodes one further bit every two rounds.
The last Hadamard is therefore applied by round \(h+2m-2\).
Together with the initialization, this gives depth \(O(m+h)\).
Fourier encoding, addition phases, and decoding require
\(O(m^2)\) gates per target word, so the full-tree circuit has
size \(O(rm^2)\). The same construction applies in parallel to
the target trees in \(\mathcal T'\), with each tree's root left
in computational encoding. Reversing that circuit implements
the preprocessing \(\Pi_{\mathcal T'}^{-1}\) with the same
resource bounds. The two circuits in Eq.~\eqref{eq:main-madd}
therefore give total depth \(O(m+h)=O(m+\log r)\) and size
\(O(rm^2)\), using only \(\mathsf{Q},\mathsf{B}_1,\ldots,\mathsf{B}_r\), as claimed
in Lemma~\ref{thm:simultaneous-addition}.

\section{Symmetric Boolean Functions}
\label{sec:symmetric-functions}

A Boolean function \(f:\bits^n\to\bits\) is symmetric if its
value depends only on the Hamming weight of its input. Thus there
is a function \(g:\{0,\ldots,n\}\to\bits\) such that
\begin{equation}
  f(x)=g(|x|).
  \label{eq:symmetric-table}
\end{equation}
We evaluate \(f\) reversibly by preserving the input and adding
\(f(x)\) modulo two to a target qubit. Let \(\mathsf{X}\) be the
\(n\)-qubit input register, \(\mathsf{T}\) the target, and \(\mathsf{W}\) an
initially zero weight register of
\(m=\lceil\log_2(n+1)\rceil\) qubits.

\begin{theorem}[Exact symmetric Boolean evaluation]
  \label{thm:symmetric-oracle}
  For \(n\ge2\) and every symmetric Boolean function
  \(f:\bits^n\to\bits\), there is an exact circuit
  \begin{equation}
    \begin{aligned}
      &|x\rangle_\mathsf{X}|t\rangle_\mathsf{T}|0^m\rangle_\mathsf{W}\longmapsto
      |x\rangle_\mathsf{X}|t\oplus f(x)\rangle_\mathsf{T}|0^m\rangle_\mathsf{W}
    \end{aligned}
    \label{eq:symmetric-oracle-map}
  \end{equation}
  of depth \(O(\log n)\), size \(O(n\log n)\), and width
  \(n+m+1\). The only initially clean workspace is \(\mathsf{W}\).
  All other temporary qubits are borrowed from \(\mathsf{X}\) and
  restored exactly.
\end{theorem}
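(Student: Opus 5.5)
The plan is compute, evaluate, uncompute. First apply the counter $\CHW$ of Theorem~\ref{thm:hamming-weight} to $\mathsf{X},\mathsf{W}$, which produces $|x\rangle_\mathsf{X}\bigl||x|\bigr\rangle_\mathsf{W}$ with depth $O(\log n)$ and size $O(n\log n)$. Next apply an oracle $U_g:|w\rangle_\mathsf{W}|t\rangle_\mathsf{T}\mapsto|w\rangle_\mathsf{W}|t\oplus g(w)\rangle_\mathsf{T}$, where $g$ is extended by $g(w)=0$ for $n<w<N$. Finally apply $\CHW^{-1}$. Since $\CHW^{-1}$ returns $\mathsf{W}$ to $|0^m\rangle$ and $\mathsf{T}$ is untouched by the counter, the net map is Eq.~\eqref{eq:symmetric-oracle-map}. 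The whole difficulty is therefore to implement $U_g$, an arbitrary $m$-variable Boolean function on $m=O(\log n)$ bits, with depth $O(\log n)$ and size $O(n\log n)$. Its only extra workspace should be dirty qubits borrowed from $\mathsf{X}$, which holds $n\ge 2^{m-1}$ qubits.

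For $U_g$ I would use the Fourier/phase viewpoint already used for the counter. Conjugating $\mathsf{T}$ by $H$ turns $U_g$ into the diagonal phase $(-1)^{t\,g(w)}$. Write $t\,g(w)$ as a polynomial over $\mathbb{Z}_2$, namely $t\bigoplus_{S\subseteq\{0,\dots,m-1\}}c_S\prod_{j\in S}w_j$. Then $U_g$ is a product of up to $2^m\le 2n$ multi-controlled $Z$ gates on $\mathsf{T}$ and subsets of $\mathsf{W}$. Applied one at a time, these would give depth $\Theta(n)$.

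To parallelize, I would evaluate the monomials $\prod_{j\in S}w_j$ into borrowed qubits using a dirty-qubit (toggle) trick. The idea is to split $\mathsf{W}$ into low half $\mathsf{W}_L$ and high half $\mathsf{W}_H$, each of about $m/2$ bits. Every monomial is then a product of a low monomial and a high monomial, and there are only $2^{m/2}=O(\sqrt n)$ of each kind. The steps are as follows.
\begin{enumerate}
\item Copy $\mathsf{W}$ by CNOT fanout trees into $O(\sqrt n\,m)$ borrowed input qubits, XORed onto their unknown contents.
\item From these copies, compute all $2^{m/2}$ low monomials and all $2^{m/2}$ high monomials by a doubling tree of Toffolis, each XORed onto a borrowed qubit. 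This has depth $O(m)$ and size $O(\sqrt n)$.
\item Compute the product phase $(-1)^{t\sum_S c_S\,\ell_{S_L}h_{S_H}}$. This is a bilinear form in the low vector $\ell$ and high vector $h$, with $\mathsf{T}$ as an extra control. It is realized by CNOT-fanning each $h$-bit onto borrowed qubits paired with the matching $\ell$-bits, followed by a layer of disjoint CCZ gates, all controlled by copies of $t$.
\end{enumerate}
Each gate touches a dirty qubit $d$, so it produces $(-1)^{(d\oplus y)\cdot z}$ instead of $(-1)^{yz}$. As in Eq.~\eqref{eq:one-block-cancellation}, I would repeat every block with the dirty register toggled by the computation and then uncomputed. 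Running the sequence compute, phase, uncompute, phase$^{-1}$ against the unknown $d$ cancels every $d$-dependent term, leaving exactly the desired phase. Total size is $O(n)$ CCZ gates plus $O(\sqrt n\,m)$ fanout gates, and depth is $O(m)=O(\log n)$. The pairings are perfect matchings across the $\ge n$ available input qubits.

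The main obstacle is step~3. It asks for $\ell$ and $h$ replicated so that all $2^m$ pairs $(\ell_a,h_b)$ with $c_{a,b}=1$ are simultaneously co-located on disjoint qubits. That needs up to $2^m$ borrowed qubit pairs, but $\mathsf{X}$ has only about $n\approx 2^{m-1}$ qubits. I would resolve this by processing the pairs in a constant number of sequential batches, say rows $a$ in four groups, so each batch needs at most $n/2$ pairs. The cancellation identity must also hold for general dirty states, including entangled ones. Because every gate is diagonal or a basis permutation and the $d$-dependence cancels pointwise in the computational basis, it extends linearly, which establishes exactness.
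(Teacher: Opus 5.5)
Your outer reduction is the paper's and is correct: compute the weight with $\CHW$, apply $U_g$ with $g$ extended by zero, then uncompute. The gap is in implementing $U_g$ with only borrowed workspace, which is the real content of the theorem.

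The cancellation you rely on is $(-1)^{(d\oplus y)z}(-1)^{dz}=(-1)^{yz}$. It holds only when the phase is linear in one toggled dirty register and the coefficient $z$ is a data value, identical in both phase applications. That is the situation of Eq.~\eqref{eq:one-block-cancellation}, where the dirty word enters linearly with coefficient $|x_\mathsf{C}|$ read from the input. Nothing in your circuit has that form:
\begin{itemize}
\item Fanout into borrowed qubits yields $d_j\oplus w_j$. The Toffolis of your doubling tree therefore have dirty controls and leave $d_a\oplus\prod_{j\in S}(d_j\oplus w_j)$ in a monomial register, not $d_a\oplus\ell_a(w)$.
\item Each CCZ in step~3 acts on three dirty factors: copies of $t$, $\ell_a$ and $h_b$.
\item Even if those registers held $d\oplus M$ with $M$ data, for a cubic phase $\phi$ the difference $\phi(d\oplus M)\oplus\phi(d)$ is still quadratic in $d$.
\end{itemize}
So compute--phase--uncompute--phase leaves $d$-dependent phases. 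The map is not exact, and $\mathsf{X}$ is not restored.

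A repair needs higher-order toggling. A product of several independently dirty factors must be evaluated over all toggle patterns. This has to be done recursively inside the monomial tree, because every level's inputs are themselves dirty. Naively nesting even the four-term bilinear pattern through the $\log(m/2)$ levels of the doubling tree costs a factor $4$ in depth per level. That gives depth of order $m^2=\Theta(\log^2 n)$. Expanding a degree-$|S|$ monomial directly in dirty copies instead costs $2^{|S|}$, up to $\Theta(\sqrt n)$. So your claimed $O(m)$ depth has no supporting mechanism.

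The missing idea is to let the $m$ weight bits act on $\Theta(n)$ positions without ever copying them into dirty qubits. The paper does this in Lemma~\ref{thm:boolean-oracle} by controlled-SWAP routing:
\begin{itemize}
\item The address bits are physically moved into routers of a binary tree.
\item The state of a single borrowed qubit $\mathsf{B}$ is routed to the selected leaf, flipped by a fixed leaf gate, and routed back.
\item In the sequence $\Query_h$, CNOT, $\Query_h$, CNOT, the only dirty dependence is linear in the one bit $b$, so it cancels, and $\Query_h^2=I$ restores the whole tree.
\item Pipelining the routing gives depth $O(k)$.
\item Splitting off the top bit of $\mathsf{W}$, at the cost of one extra borrowed qubit and two Toffolis, keeps the workspace within $2^{m-1}\le n$.
\end{itemize}
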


The construction has three stages: compute the Hamming weight
into \(\mathsf{W}\), evaluate \(g\) on that weight to flip \(\mathsf{T}\),
and reverse the weight computation. The counting circuit is
provided by Theorem~\ref{thm:hamming-weight}. The remaining task
is to evaluate \(g\) using workspace borrowed from \(\mathsf{X}\),
and to restore \(\mathsf{X}\) before reversing the count. We use the
following general Boolean evaluation circuit.

\begin{lemma}[Boolean evaluation with dirty workspace]
  \label{thm:boolean-oracle}
  For every \(k\ge1\) and every Boolean function
  \(h:\bits^k\to\bits\), there is an exact circuit implementing
  \begin{equation}
    |y\rangle_\mathsf{Y}|t\rangle_\mathsf{T}
    \longmapsto
    |y\rangle_\mathsf{Y}|t\oplus h(y)\rangle_\mathsf{T},
    \label{eq:boolean-oracle-operator}
  \end{equation}
  where \(\mathsf{Y}\) is a \(k\)-qubit input register and \(\mathsf{T}\) is
  a disjoint target qubit. The circuit has depth \(O(k)\) and
  size \(O(2^k)\) and requires no initially clean ancillary
  qubits. Its additional workspace satisfies
  \begin{equation}
    \text{number of dirty qubits}\le 2^k-k-1.
    \label{eq:boolean-dirty-count}
  \end{equation}
  All borrowed qubits are restored exactly, including any initial
  entanglement with other registers.
\end{lemma}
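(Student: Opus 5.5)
The count $2^k-k-1$ in Eq.~\eqref{eq:boolean-dirty-count} equals the number of subsets $S\subseteq[k]$ with $|S|\ge2$. This suggests the plan: assign one borrowed qubit $\mathsf{D}_S$ to each such subset, and evaluate $h$ through its algebraic normal form,
\[
  h(y)=\bigoplus_{S\subseteq[k]}a_S\prod_{i\in S}y_i,\qquad a_S\in\bits.
\]
The constant and degree-one terms are applied directly to $\mathsf{T}$ with an $X$ gate and CNOTs from $\mathsf{Y}$. Every higher monomial is routed through its dirty qubit $\mathsf{D}_S$, which is built from a parent monomial. I would fix a spanning tree on the subsets: the parent of $S$ is $S'=S\setminus\{\max S\}$, with the singletons realized by the input qubits $\mathsf{Y}_i$ themselves. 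This is the binomial tree, and the Toffoli $\operatorname{CCX}_{\mathsf{D}_{S'},\mathsf{Y}_{\max S}\to\mathsf{D}_S}$ advances a monomial by one variable.

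The key device is the standard toggling trick for dirty qubits (as in Barenco et al.), applied level by level. Write $U$ for the circuit that performs all parent-to-child Toffolis, and $P$ for the CNOT fan-in that XORs every $\mathsf{D}_S$ with $a_S=1$ into $\mathsf{T}$. I would prove by induction on the level $\ell=|S|$ that the following sequence adds exactly $\bigoplus_{|S|\le\ell}a_S\prod_{i\in S}y_i$ to $\mathsf{T}$ and returns every $\mathsf{D}_S$ to its initial value:
\begin{itemize}
\item[] a nested alternation of $P$ with level-$\ell$ Toffolis, followed by their inverses.
\end{itemize}
This alternation is the compute--toggle--uncompute--toggle pattern: the $\mathsf{T}$-contributions of the unknown dirty values appear twice and cancel modulo two, leaving only the true products. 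At each level, the effective Toffoli controls are XOR-differences of the parents' dirty contents, and the inductive hypothesis restores those contents. Because every operation is a permutation of basis states that is the identity on the borrowed qubits in every branch, restoration holds even when the borrowed qubits are entangled with other registers.

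Resources are counted as follows.
\begin{itemize}
\item[] The size is $O(2^k)$, because each subset sees $O(1)$ Toffolis and CNOTs per use of the pattern.
\item[] The fan-in $P$ into the single qubit $\mathsf{T}$ can be done as a balanced CNOT tree over at most $2^k$ qubits, in depth $O(k)$. Its partial parities are accumulated into dirty qubits and uncomputed, which is again safe by the toggling argument.
\item[] In the binomial tree a node of size $\ell$ has up to $k-\ell$ children, but siblings $S'\cup\{i\}$ can be fed along a chain, $\mathsf{D}_{S'}\to\mathsf{D}_{S'\cup\{i\}}\to\cdots$, in the usual binomial-broadcast order.
\end{itemize}
With this chaining, the whole tree of Toffolis is scheduled so that a subset ending in element $i$ is processed around time $O(i)$, giving total depth $O(k)$ rather than $O(k^2)$.

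I expect the main obstacle to be combining the last two points. The garbage cancellation must use only a constant number of passes per level, or otherwise be pipelined, so that nesting over $k$ levels does not multiply the depth; naive nesting of the toggle trick doubles the cost at each level. My first attempt would be to pipeline the toggles in the same way the Fourier pipeline of Lemma~\ref{thm:simultaneous-addition} overlaps tree levels, so that each level's compute and uncompute are offset by a constant number of rounds from its parent's. Failing that, I would replace the monomial expansion by a Walsh (parity) expansion of $h$ and use CNOT-only parity accumulation into the dirty qubits, where garbage cancels linearly.
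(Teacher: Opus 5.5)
\textbf{The depth bound fails.} The obstacle is fan-out from the input register, which your sibling chaining does not address. Write $m_S(y)=\prod_{i\in S}y_i$. Every Toffoli in your binomial tree, $\operatorname{CCX}_{\mathsf{D}_{S'},\mathsf{Y}_{\max S}\to\mathsf{D}_S}$, is controlled by $\mathsf{Y}_{\max S}$. So $\mathsf{Y}_k$ controls one Toffoli for each nonempty $S'\subseteq[k-1]$, which is $2^{k-1}-1$ Toffolis in every pass. Gates in a layer have disjoint supports, so these gates, implemented separately, occupy at least $2^{k-1}-1$ distinct layers. No schedule can process the subsets ending in $k$ around time $O(k)$.

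Chaining only targets the fan-out of $\mathsf{D}_{S'}$, and it computes the wrong product. A Toffoli fed from the sibling $\mathsf{D}_{S'\cup\{i_1\}}$ and $\mathsf{Y}_{i_2}$ yields $m_{S'}y_{i_1}y_{i_2}$, not the sibling monomial $m_{S'}y_{i_2}$. Within your framework, depth $O(k)$ would require broadcasting each $y_j$ to exponentially many locations. But your budget $2^k-k-1$ is entirely consumed by one qubit per monomial, and copies written into dirty qubits carry garbage that the Toffolis then multiply with other garbage. The match with the number of subsets of size at least two is numerology: for $k\ge2$ the paper's construction uses only $3\cdot2^{k-2}-2\le 2^k-k-1$ dirty qubits.

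\textbf{Cancellation and fallback.} The garbage cancellation is left as an unspecified ``nested alternation,'' but this part can be repaired. First run a bottom-up pass of the Toffolis with $|S'|\ge2$, so each child sees its parent's original dirty value. Then run a top-down pass over all levels. This maps every $\mathsf{D}_S$ to $\mathsf{D}_S\oplus m_S(y)$ simultaneously; it is the XOR analogue of $\Pi_{\mathcal T}\Pi_{\mathcal T'}^{-1}$. Calling this map $V$, the sequence $P;V;P;V^{-1}$ adds $\bigoplus_S a_S m_S(y)$ to $\mathsf{T}$ with a constant number of passes. This repair does not cure the fan-out.

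Your fallback fails outright. Circuits of CNOT and $X$ gates act affinely on basis states, so they cannot add a nonlinear $h$ to $\mathsf{T}$. The real-coefficient Walsh expansion instead needs non-Clifford phase rotations, and their phases on a dirty qubit holding $d\oplus p$ retain a dependence on $d$.

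\textbf{What the paper does instead.} The missing idea is routing, which replaces broadcasting by movement. Controlled-SWAP layers carry each address state into a single router at depth $j$. Off-path routers merely receive dirty values, which only misroute dirty data. Hence every qubit is touched $O(1)$ times per routed state. The state of one borrowed qubit $\mathsf{B}$ is routed to leaf $y$, flipped by fixed leaf $X$ gates, and routed back. Since $\Query_h^2=I$, the sequence $\Query_h$, $\operatorname{CNOT}_{\mathsf{B}\to\mathsf{T}}$, $\Query_h$, $\operatorname{CNOT}_{\mathsf{B}\to\mathsf{T}}$ cancels the unknown $b$ and restores the whole tree. Pipelining gives depth $O(k)$: state $j$ crosses level $r$ in round $2j+r$ and is deposited in round $3j$. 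The stated workspace bound comes from splitting off $\mathsf{Y}_0,\mathsf{Y}_1$ and using three-controlled target flips $K_\sigma$ over four restricted functions on $k-2$ bits.
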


We first use this lemma to prove
Theorem~\ref{thm:symmetric-oracle}, including the bound on
borrowed space. We then describe the evaluation circuit and its
routing pipeline. The complete schedule and proof of
Lemma~\ref{thm:boolean-oracle} appear in
Appendix~\ref{subsec:boolean-oracle}.

\subsection{Symmetric functions from Hamming weight}
\label{subsec:symmetric-reduction}

Applying Lemma~\ref{thm:boolean-oracle} to all \(m\) weight
bits would allow up to \(2^m-m-1\) dirty qubits, which may
exceed the \(n\) available input qubits. To fit the evaluation
inside \(\mathsf{X}\), separate the most significant weight bit. Put
\(k=m-1\), and write
\[
  \mathsf{C}:=\mathsf{W}_k,\qquad \mathsf{Y}:=(\mathsf{W}_0,\ldots,\mathsf{W}_{k-1}).
\]
For basis values \(c\in\bits\) and \(0\le y<2^k\),
the value in \(\mathsf{W}\) is \(2^kc+y\). Here \(y\) is the
integer represented by the lower \(k\) bits.

Extend \(g\) by setting \(g(z)=0\) for \(n<z<2^m\);
these values are never reached by a valid Hamming weight. Define
two Boolean functions of the lower bits:
\begin{align}
  g_0(y)&:=g(y),\notag\\
  g_\Delta(y)&:=g(y)\oplus g(2^k+y).
  \label{eq:two-branch-functions}
\end{align}
Then
\begin{equation}
  g(2^kc+y)=g_0(y)\oplus c\,g_\Delta(y).
  \label{eq:two-branch-identity}
\end{equation}
The first term can be added directly to \(\mathsf{T}\). To add the
second term, borrow one qubit \(\mathsf{A}\) from \(\mathsf{X}\), and choose
the workspace for Lemma~\ref{thm:boolean-oracle} from the
remaining input qubits. The total borrowed space is at most
\begin{equation}
  (2^k-k-1)+1\le 2^k=2^{m-1}\le n.
  \label{eq:symmetric-capacity}
\end{equation}
Thus all required workspace fits inside \(\mathsf{X}\), and the same
qubits can be reused for each application of the lemma.

Apply the following operations in the listed order:
\begin{enumerate}[leftmargin=*,itemsep=2pt,topsep=4pt]
  \item Evaluate \(g_0\) on \(\mathsf{Y}\), adding its value to \(\mathsf{T}\).
  \item Evaluate \(g_\Delta\) on \(\mathsf{Y}\), adding its value to \(\mathsf{A}\).
  \item Apply \(\operatorname{CCX}_{\mathsf{C},\mathsf{A}\to \mathsf{T}}\).
  \item Repeat step 2, restoring \(\mathsf{A}\).
  \item Repeat step 3.
\end{enumerate}
Each function evaluation uses Lemma~\ref{thm:boolean-oracle}
and restores its workspace. If \(a\) is the initial basis
value of \(\mathsf{A}\), the two Toffoli gates add
\begin{equation}
  c\bigl(a\oplus g_\Delta(y)\bigr)\oplus ca
  =c\,g_\Delta(y)
  \label{eq:symmetric-accumulator-cancellation}
\end{equation}
to \(\mathsf{T}\). The unknown initial value of the borrowed qubit
therefore cancels. Together with the first evaluation, this gives
\begin{equation}
  \begin{aligned}
    &|x\rangle_\mathsf{X}|t\rangle_\mathsf{T}|c\rangle_\mathsf{C}|y\rangle_\mathsf{Y}\\
    &\qquad\longmapsto
    |x\rangle_\mathsf{X}|t\oplus g(2^kc+y)\rangle_\mathsf{T}
    |c\rangle_\mathsf{C}|y\rangle_\mathsf{Y}.
  \end{aligned}
  \label{eq:symmetric-evaluation-operator}
\end{equation}
In particular, the entire input register \(\mathsf{X}\) is restored
before the weight computation is reversed.

\begin{proof}[Proof of Theorem~\ref{thm:symmetric-oracle}]
  Compute the Hamming weight into \(\mathsf{W}\), perform the above
  evaluation, and reverse the counting circuit. Equation~
  \eqref{eq:symmetric-evaluation-operator} preserves \(\mathsf{X},\mathsf{W}\)
  and adds \(g(|x|)=f(x)\) to \(\mathsf{T}\). The inverse count
  therefore returns \(\mathsf{W}\) to \(|0^m\rangle\).

  The three function evaluations have total depth \(O(k)\)
  and size \(O(2^k)\), and the two Toffoli gates have
  constant cost. Since \(k=O(\log n)\) and \(2^k\le n\),
  this stage has depth \(O(\log n)\) and size \(O(n)\).
  Combining it with the forward and inverse counts gives depth
  \(O(\log n)\) and size \(O(n\log n)\). By
  Eq.~\eqref{eq:symmetric-capacity}, all borrowed qubits lie
  in \(\mathsf{X}\). The width is therefore \(n+m+1\), with only
  \(\mathsf{W}\) required as clean workspace.
\end{proof}

For comparison, Ref.~\cite{ZiNieSun2025} achieves depth
\(O(\log^2 n)\) with the same \(m\) clean workspace qubits,
whereas the \(O(\log n)\)-depth construction of
Ref.~\cite{ZiYuanNieZhang2026} uses \(O(n/\log n)\) clean
ancillary qubits. Theorem~\ref{thm:symmetric-oracle} retains
logarithmic depth using only the weight register as clean
workspace. Since parity is symmetric and requires depth
\(\Omega(\log n)\) in this gate model~\cite[Theorem~3.2]{FangEtAl2006},
the worst-case depth over symmetric Boolean functions is optimal.

\subsection{Boolean evaluation by pipelined routing}
\label{subsec:dirty-lookup-mechanism}

\begin{figure*}[t]
\centering
\resizebox{\textwidth}{!}{%
\begin{tikzpicture}[x=1cm,y=1cm,
  font=\fontsize{9}{11}\selectfont,
  channel/.style={draw=inkblue,line width=.55pt,fill=white,
    minimum width=1.08cm,minimum height=.52cm,inner sep=3pt},
  router/.style={draw=inkblue,line width=.55pt,fill=lightblue,
    minimum width=1.08cm,minimum height=.52cm,inner sep=3pt},
  gate/.style={draw=inkblue,line width=.6pt,fill=lightblue,
    minimum width=1.08cm,minimum height=.58cm,inner sep=3pt},
  leafgate/.style={gate,minimum width=.43cm,
    minimum height=.43cm,inner sep=2pt},
  flipgate/.style={draw=amber,fill=lightamber},
  identitygate/.style={draw=black!35,fill=white},
  routeevent/.style={gate,minimum width=.72cm,
    minimum height=.36cm,inner sep=1pt,
    font=\fontsize{9}{11}\selectfont},
  depositevent/.style={routeevent,draw=amber,fill=lightamber},
  routehighlight/.style={draw=inkblue,line width=.95pt,
    fill=inkblue!25!white},
  deposithighlight/.style={draw=amber,line width=.95pt,
    fill=amber!23!white},
  selected/.style={-{Stealth[length=2mm]},
    draw=inkblue,line width=1.05pt},
  ordinary/.style={-{Stealth[length=1.7mm]},
    draw=black!40,line width=.55pt},
  heading/.style={anchor=west,
    font=\fontsize{10}{12}\selectfont\bfseries},
  note/.style={align=center,inner sep=0pt}]

\node[heading] at (0,5.95)
  {(a) Routing by the input: $(y_0,y_1)=(1,0)$};
\node[heading] at (8.85,5.95)
  {(b) Evaluation and restoration};
\draw[black!22,line width=.5pt] (8.37,.26)--(8.37,5.72);

% The root has only a router; B is an external borrowed wire.
\node[channel,dashed,minimum width=1.9cm] (borrowed) at (3.00,5.05)
  {$\mathsf{B}$ (borrowed)};
\node[router,minimum width=2.2cm] (root) at (5.62,5.05)
  {$\mathsf{R}_{\emptyset}:y_0=1$};
\draw[densely dashed,inkblue!65] (root.west)--(borrowed.east);
\node[note] at (5.62,4.56)
  {Root: routing qubit only};
\node[channel] (m0) at (1.35,3.57) {$\mathsf{M}_0$};
\node[router] (r0) at (2.74,3.57) {$\mathsf{R}_0$};
\node[channel] (m1) at (4.91,3.57) {$\mathsf{M}_1$};
\node[router,minimum width=1.62cm] (r1) at (6.53,3.57)
  {$\mathsf{R}_1:y_1=0$};
\draw[black!25,line width=.5pt,rounded corners=2pt]
  (.66,3.13) rectangle (3.44,4.00);
\draw[black!25,line width=.5pt,rounded corners=2pt]
  (4.22,3.13) rectangle (7.47,4.00);
\draw[densely dashed,inkblue!65] (r0.west)--(m0.east);
\draw[densely dashed,inkblue!65] (r1.west)--(m1.east);
\draw[ordinary] (borrowed.south)--node[pos=.55,left] {$0$}(m0.north);
\draw[selected] (borrowed.south)--node[pos=.55,right] {$1$}(m1.north);
\node[channel] (m00) at (.77,2.12) {$\mathsf{M}_{00}$};
\node[channel] (m01) at (2.67,2.12) {$\mathsf{M}_{01}$};
\node[channel,fill=lightblue,line width=1.05pt]
  (m10) at (4.77,2.12) {$\mathsf{M}_{10}$};
\node[channel] (m11) at (6.87,2.12) {$\mathsf{M}_{11}$};
\draw[ordinary] (m0.south)--node[pos=.46,left] {$0$}(m00.north);
\draw[ordinary] (m0.south)--node[pos=.46,right] {$1$}(m01.north);
\draw[selected] (m1.south)--node[pos=.46,left] {$0$}(m10.north);
\draw[ordinary] (m1.south)--node[pos=.46,right] {$1$}(m11.north);
\foreach \leaf/\xx/\mask/\gatestyle in {
  00/.77/X/flipgate,01/2.67/I/identitygate,
  10/4.77/X/flipgate,11/6.87/I/identitygate}{
  \draw[black!45,line width=.5pt] (m\leaf.south)--(\xx,1.51);
  \node[leafgate,\gatestyle] at (\xx,1.30) {$\mask$};
}
\node[note] at (4.00,.64)
  {Leaf gates $\mathcal L_h$ encode the fixed function $h$.\\
   The blue path carries the state of $\mathsf{B}$ to $\mathsf{M}_{10}$.};

% The four gate boxes are executed from left to right.
\node[note] at (13.13,5.16)
  {Route, apply leaf gates, reverse route};
\node[note] at (13.13,4.63)
  {$\mathcal Q_h=\mathcal R_k^{-1}\mathcal L_h\mathcal R_k$,
   \qquad $\mathcal Q_h^2=I$};
\node[gate] (q1) at (9.61,3.78) {$\mathcal Q_h$};
\node[gate,flipgate,minimum width=1.65cm] (k1) at (11.74,3.78)
  {$\operatorname{CNOT}_{\mathsf{B}\to \mathsf{T}}$};
\node[gate] (q2) at (13.87,3.78) {$\mathcal Q_h$};
\node[gate,flipgate,minimum width=1.65cm] (k2) at (16.00,3.78)
  {$\operatorname{CNOT}_{\mathsf{B}\to \mathsf{T}}$};
\draw[ordinary,draw=inkblue] (q1.east)--(k1.west);
\draw[ordinary,draw=inkblue] (k1.east)--(q2.west);
\draw[ordinary,draw=inkblue] (q2.east)--(k2.west);
\node[note] at (13.13,4.26)
  {Execution order $\longrightarrow$};
\node[note,inner sep=3pt] (changed) at (10.54,2.69)
  {$\mathsf{B}:b\oplus h(y)$\\tree may change};
\node[note,inner sep=3pt] (restored) at (15.02,2.69)
  {$\mathsf{B}:b$\\tree restored};
\draw[inkblue!55,line width=.5pt]
  (q1.south)--(9.61,3.15)--(changed.north);
\draw[inkblue!55,line width=.5pt]
  (q2.south)--(13.87,3.15)--(restored.north);
\node[note] at (13.13,1.77)
  {Control: $\mathsf{B}$; target: $\mathsf{T}$};
\node[note,draw=inkblue!65,fill=lightblue,line width=.6pt,
  minimum width=8.08cm,minimum height=.78cm,inner sep=4pt]
  at (13.13,.70)
  {Net target flip: $h(y)$.\\
   $\mathsf{Y},\mathsf{B}$ and the entire dirty tree are restored.};

\draw[black!22,line width=.5pt] (0,-.02)--(17.22,-.02);
\node[heading] at (0,-.47) {(c) Routing pipeline: $k=5$};
\node[anchor=east] at (17.22,-.47)
  {New states enter every $2$ rounds};

% A route crosses level r in round 2j+r; Y_j is deposited in round 3j.
% The B row has state index j=5 and no deposit.
\foreach \t in {11,12}{
  \fill[black!6] (1.90+.99*\t-.495,-4.635)
    rectangle (1.90+.99*\t+.495,-1.32);
}
\node[anchor=east] at (1.20,-1.04) {Round $t$};
\foreach \t in {0,...,14}{
  \node at (1.90+.99*\t,-1.04) {$\t$};
  \draw[black!16,line width=.35pt]
    (1.90+.99*\t-.495,-1.455)--(1.90+.99*\t-.495,-4.635);
}
\draw[black!16,line width=.35pt] (16.255,-1.455)--(16.255,-4.635);
\foreach \j in {0,...,5}{
  \ifnum\j<5
    \node[anchor=east] at (1.20,-1.72-.53*\j) {$\mathsf{Y}_{\j}$};
  \else
    \node[anchor=east] at (1.20,-1.72-.53*\j) {$\mathsf{B}$};
  \fi
  \draw[black!16,line width=.35pt]
    (1.405,-1.455-.53*\j)--(16.255,-1.455-.53*\j);
}
\draw[black!16,line width=.35pt] (1.405,-4.635)--(16.255,-4.635);
\foreach \j/\r/\t in {
  1/0/2,
  2/0/4,2/1/5,
  3/0/6,3/1/7,3/2/8,
  4/0/8,4/1/9,4/2/10,4/3/11,
  5/0/10,5/1/11,5/2/12,5/3/13,5/4/14}{
  \ifnum\t<11
    \node[routeevent]
      at (1.90+.99*\t,-1.72-.53*\j) {$r{=}\r$};
  \else
    \ifnum\t<13
      \node[routeevent,routehighlight]
        at (1.90+.99*\t,-1.72-.53*\j) {$r{=}\r$};
    \else
      \node[routeevent]
        at (1.90+.99*\t,-1.72-.53*\j) {$r{=}\r$};
    \fi
  \fi
}
\foreach \j/\t in {0/0,1/3,2/6,3/9,4/12}{
  \ifnum\j=4
    \node[depositevent,deposithighlight]
      at (1.90+.99*\t,-1.72-.53*\j) {$j{=}\j$};
  \else
    \node[depositevent]
      at (1.90+.99*\t,-1.72-.53*\j) {$j{=}\j$};
  \fi
}
\node[anchor=west] at (0,-5.03)
  {Blue: route $r\to r+1$;\qquad orange: deposit at depth $j$.};
\node[anchor=east] at (17.22,-5.03)
  {Shaded columns: disjoint operations.};
\end{tikzpicture}%
}
\caption{Boolean evaluation by routing with dirty workspace.
(a) The input $(y_0,y_1)=(1,0)$ is stored in routers on the
selected path. The state of $\mathsf{B}$ follows the blue path to leaf
$10$; bits are listed in routing order, so the integer input is
$y=1$. Each nonroot internal vertex has a channel $\mathsf{M}_u$ and a
router $\mathsf{R}_u$. The root has only a router, and leaves have only
channels. Dashed links indicate router control. The leaf gates
encode $h(0,0)=h(1,0)=1$ and $h(0,1)=h(1,1)=0$.
All borrowed qubits may start dirty.
(b) Two applications of $\mathcal Q_h$, each followed by
$\operatorname{CNOT}_{\mathsf{B}\to \mathsf{T}}$, add $h(y)$ to the target and
restore $\mathsf{B}$ and the tree. Operations execute from left to right.
(c) Each row tracks the state initially on $\mathsf{Y}_j$ or $\mathsf{B}$, rather
than a fixed physical wire. A blue box marked $r$ routes the
state from depth $r$ to $r+1$; an orange box marked $j$ deposits
it in a router at depth $j$. New states start at two-round
intervals, and each moving state advances one level per round.
The shaded columns show operations with disjoint supports:
in round $11$, $\mathsf{Y}_4$ crosses level $3$ while $\mathsf{B}$ crosses level
$1$; in round $12$, $\mathsf{Y}_4$ is deposited at depth $4$ while $\mathsf{B}$
crosses level $2$.}
\label{fig:lookup}
\end{figure*}

We now explain the construction behind
Lemma~\ref{thm:boolean-oracle}. A routing tree sends a borrowed
state to a leaf selected by the input, where a fixed gate encodes
the function value. Two uses of this operation write the answer
to the target and restore the borrowed space. Overlapping the
address loading and routing then gives depth \(O(k)\).

\paragraph{Routing to the selected leaf.}
Fix \(h:\bits^k\to\bits\), with input register
\(\mathsf{Y}=(\mathsf{Y}_0,\ldots,\mathsf{Y}_{k-1})\) and target qubit \(\mathsf{T}\).
For a basis input, let \(y_j\) be the bit on \(\mathsf{Y}_j\) and
identify the bit string with the integer
\(y=\sum_{j=0}^{k-1}2^jy_j\). Thus the bits are routed
from least to most significant, and
\(h(y)=h(y_0,\ldots,y_{k-1})\).

Take a full binary tree of depth \(k\). A vertex at depth
\(j\) has a label \(u\in\bits^j\) recording the path from
the root, with \(0\) for left and \(1\) for right. Every
nonroot vertex has a channel qubit \(\mathsf{M}_u\), and every internal
vertex has a routing qubit \(\mathsf{R}_u\), called its router. The root
has only \(\mathsf{R}_\emptyset\), and leaves have only channels.
Borrow all these tree qubits and an additional qubit \(\mathsf{B}\),
disjoint from \(\mathsf{Y},\mathsf{T}\) and the tree. Their initial states are
arbitrary and may be entangled with other registers.

At an internal vertex, two controlled-SWAPs route an incoming
state: first exchange it with the left child's channel if the
router is \(0\), then with the right child's channel if the
router is \(1\). The incoming wire is \(\mathsf{M}_u\) at a nonroot
vertex and an external source wire at the root. A routing layer
at level \(r\) applies these pairs at every vertex of depth
\(r\), including those outside the selected path. It moves
the selected state from depth \(r\) to depth \(r+1\).

To understand the route, first load the address sequentially.
Swap \(\mathsf{Y}_0\) into \(\mathsf{R}_\emptyset\). For each
\(1\le j<k\), route the state of \(\mathsf{Y}_j\) through levels
\(0,\ldots,j-1\), then swap \(\mathsf{M}_u\) with \(\mathsf{R}_u\) at
every vertex of depth \(j\). We call this last step a
\emph{deposit}. On the selected path, it stores \(y_j\) in
the router; the other deposits exchange dirty states. The
separate channel and router let later states pass through a
vertex while its address bit remains in place. Finally, route
the state of \(\mathsf{B}\) through all \(k\) levels. It reaches the
leaf with path label \(y_0\cdots y_{k-1}\), independently
of the initial dirty values. In Fig.~\ref{fig:lookup}(a),
\((y_0,y_1)=(1,0)\) selects leaf \(10\), corresponding
to the integer input \(y=1\).

\paragraph{Evaluation and restoration.}
Let \(\Route_k\) denote the forward circuit just described.
At each leaf \(u\), apply \(X\) if \(h(u)=1\), then
reverse the route. Define
\begin{equation}
  \begin{aligned}
    \mathcal L_h&:=
      \prod_{\substack{u\in\bits^k\\h(u)=1}}X_{\mathsf{M}_u},\\
    \Query_h&:=\Route_k^{-1}\mathcal L_h\Route_k.
  \end{aligned}
  \label{eq:main-raw-query}
\end{equation}
The function \(h\) is a fixed classical specification: its
values determine which leaf gates appear in the circuit.

On a computational-basis input, \(\Query_h\) returns \(\mathsf{Y}\)
unchanged and changes \(\mathsf{B}\) from \(b\) to
\(b\oplus h(y)\). The routing bits remain in place until
their reverse deposits, so reversing the route brings the
modified state back along the same path. Other dirty tree
qubits may change because the leaf gates also act outside that
path. Since \(\mathcal L_h^2=I\), however,
\(\Query_h^2=I\): two applications restore \(\mathsf{B}\) and the
entire tree.

To write the function value into \(\mathsf{T}\), execute the following
four operations from left to right:
\begin{equation}
  \Query_h
  \;\longrightarrow\;
  \operatorname{CNOT}_{\mathsf{B}\to \mathsf{T}}
  \;\longrightarrow\;
  \Query_h
  \;\longrightarrow\;
  \operatorname{CNOT}_{\mathsf{B}\to \mathsf{T}}.
  \label{eq:main-lookup-cleanup}
\end{equation}
The CNOTs add the two successive values of \(\mathsf{B}\) to \(\mathsf{T}\),
giving the net flip
\[
  \bigl(b\oplus h(y)\bigr)\oplus b=h(y).
\]
The unknown value \(b\) cancels, as in
Eq.~\eqref{eq:symmetric-accumulator-cancellation}. Since the
CNOTs change only \(\mathsf{T}\), the two applications of \(\Query_h\)
restore \(\mathsf{B}\) and the tree exactly. These gates are exact
basis permutations, so the same action and restoration hold
for arbitrary superpositions and entangled inputs.
Figure~\ref{fig:lookup}(b) shows this sequence.

\paragraph{Pipelining the routing.}
Loading each input state completely before starting the next
would take depth \(O(k^2)\). Instead, start a new state every
two logical rounds. Each moving state advances by one routing
level per round. After reaching its designated depth, an
address state is deposited in the following round; \(\mathsf{B}\)
enters last and continues to a leaf.

Index the states initially on \(\mathsf{Y}_0,\ldots,\mathsf{Y}_{k-1}\) by
\(j=0,\ldots,k-1\), and assign index \(j=k\) to the
state initially on \(\mathsf{B}\). A routing event has
\(0\le r<j\le k\), and a deposit has \(0\le j<k\).
Use the schedule
\begin{equation}
  \begin{aligned}
    \text{Route state }j\text{ across level }r &: \quad t=2j+r,\\
    \text{Deposit }\mathsf{Y}_j\text{ at depth }j &: \quad t=3j.
  \end{aligned}
  \label{eq:main-routing-time}
\end{equation}
Thus \(\mathsf{Y}_0\) is deposited directly at the root in round \(0\).
For \(j\ge1\), state \(j\) begins routing in round
\(2j\); an address state completes its last routing step in
round \(3j-1\) and is deposited one round later.

The two-round spacing keeps simultaneous routing layers at
least two levels apart. A layer at level \(r\) uses routers
at depth \(r\) and channels at depths \(r\) and \(r+1\),
with the external source replacing the root channel. Such
layers therefore have disjoint supports. Deposits fit between
successive uses of the channels: at a nonroot depth \(j\),
state \(j\) arrives in round \(3j-1\), is deposited in
round \(3j\), and the next state arrives in round \(3j+1\).
That next state uses the depth-\(j\) router only in round
\(3j+2\), after the address bit has been stored. All operations
assigned to a round consequently have disjoint supports, and
their order on each shared qubit agrees with sequential
loading. Appendix~\ref{subsec:boolean-oracle} gives the
formal verification.

Figure~\ref{fig:lookup}(c) shows the schedule for \(k=5\).
In round \(11\), \(\mathsf{Y}_4\) crosses level \(3\) while
\(\mathsf{B}\) crosses level \(1\). In round \(12\), \(\mathsf{Y}_4\)
is deposited at depth \(4\) while \(\mathsf{B}\) crosses level
\(2\). Thus \(\mathsf{B}\) can enter in round \(10\), before
the final address deposit: it needs the depth-\(4\) router
only in round \(14\).

The last routing step of \(\mathsf{B}\) occurs in round \(3k-1\),
so the forward circuit uses \(3k\) logical rounds. Each
routing pair and deposit has constant exact depth in our gate
model, and disjoint operations run in parallel. Hence
\(\Route_k\), its inverse, and the full evaluation circuit
all have depth \(O(k)\). Loading \(\mathsf{Y}_j\) uses
\(O(2^j)\) gates, while routing \(\mathsf{B}\) and applying the
leaf gates each use \(O(2^k)\) gates. Summing over the
input states gives total size \(O(2^k)\), including all
operations outside the selected path.

This full-tree construction uses \(3\cdot2^k-2\) dirty
qubits, including \(\mathsf{B}\). Appendix~\ref{subsec:boolean-oracle}
reduces this to at most \(2^k-k-1\), while preserving depth
\(O(k)\) and size \(O(2^k)\), to obtain the bound in
Lemma~\ref{thm:boolean-oracle}.

\section{Dicke-state preparation}
\label{sec:dicke-states}

For a fixed classical weight $k\in\{0,\ldots,n\}$, the Dicke state is
\begin{equation}
  |D_k^n\rangle=\binom nk^{-1/2}\sum_{|x|=k}|x\rangle.
  \label{eq:dicke-definition}
\end{equation}
The Hamming-weight counter gives two ways to act on these states:
measuring its output projects onto a fixed-weight subspace, while
phasing and uncomputing its output implements a reflection about
that subspace. We apply these operations to existing adaptive and
unitary preparation strategies. Both constructions use only
$m=\lceil\log_2(n+1)\rceil$ additional qubits, for a total width
of $n+m$. Table~\ref{tab:dicke-comparison} summarizes the main
depth--space comparisons.

\subsection{Measurement-assisted preparation}
\label{subsec:dicke-adaptive}

Here we allow parallel single-qubit measurements and resets at
constant depth, and exclude classical feedforward latency from
the depth. Yu et al.~\cite{YuEtAl2026} prepare Dicke states by
alternating collective one-qubit rotations with Hamming-weight
measurements. For even $n$, their balanced-state protocol,
including its restart rule, reaches weight $n/2$ in $O(\log n)$
rounds in expectation. Replacing its weight-measurement circuit
by our counter gives the following result.

\begin{corollary}[Measurement-assisted Dicke-state preparation]
\label{cor:dicke-adaptive}
For even $n\ge2$, there is a protocol that prepares
$|D_{n/2}^n\rangle$ exactly, terminates almost surely, and has
\begin{equation}
  \mathbb E[D]=O(\log^2 n),\qquad
  \mathbb E[S]=O(n\log^2 n).
  \label{eq:dicke-adaptive-expected}
\end{equation}
It uses $m$ reusable ancillary qubits, reset to zero after each
weight measurement.
\end{corollary}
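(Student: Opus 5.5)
The plan is to take the adaptive protocol of Yu et al.~\cite{YuEtAl2026} as a black box whose only quantum subroutines are collective one-qubit rotations (depth $O(1)$, size $O(n)$) and exact Hamming-weight measurements. Each measurement is then implemented by our counter on $\mathsf{X}$ and $\mathsf{W}$. The argument has three parts: the substituted measurement is exactly the projective weight measurement the protocol assumes; the expected number of rounds stays $O(\log n)$ with the same termination guarantee; and the resources add up to the stated bounds by linearity of expectation.

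\textbf{Step 1: the weight measurement.} Starting from $|\psi\rangle_\mathsf{X}|0^m\rangle_\mathsf{W}$, apply $\CHW$ from Theorem~\ref{thm:hamming-weight}. This gives
\[
\sum_{w}\Pi_w|\psi\rangle\otimes|w\rangle_\mathsf{W},
\]
where $\Pi_w$ projects onto the weight-$w$ subspace. Measuring $\mathsf{W}$ in the computational basis yields outcome $w$ with probability $\|\Pi_w\psi\|^2$ and leaves $\Pi_w|\psi\rangle/\|\Pi_w\psi\|\otimes|w\rangle$. The output register is now in a known basis state, so resetting it (or applying $X$ gates conditioned on $w$) returns it to $|0^m\rangle$ at constant depth. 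This avoids running $\CHW^{-1}$, although uncomputing would also work within the same bounds. Because the circuit is exact and adds no input-dependent phases, the post-measurement state of $\mathsf{X}$ matches that of an ideal weight measurement, and coherence inside each weight sector is preserved. One measurement therefore costs depth $O(\log n)$ and size $O(n\log n)$ using only the $m$ reusable ancillas.

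\textbf{Step 2: the protocol and its round count.} Yu et al.'s balanced-state protocol, including its restart rule, depends only on the measurement statistics and the post-measurement states, and these are unchanged. It therefore still outputs $|D_{n/2}^n\rangle$ exactly upon termination, terminates almost surely, and uses an expected $\mathbb E[R]=O(\log n)$ rounds. Each round is a constant number of rotation layers plus one weight measurement, so it costs at most $c\log n$ depth and $c\,n\log n$ gates for some constant $c$. Then $D\le c R\log n$ and $S\le cRn\log n$ pointwise, which gives $\mathbb E[D]=O(\log^2 n)$ and $\mathbb E[S]=O(n\log^2 n)$. If a restart reinitializes the data register, that costs $O(1)$ depth via resets and is absorbed into the round cost.

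\textbf{Main obstacle.} The hard part is confirming what Yu et al.'s result actually guarantees. The claim $\mathbb E[R]=O(\log n)$ must hold unconditionally, with restarts counted. Also, every per-round cost, including the rotation angles chosen by feedforward and any restart preparation, must be a uniform $O(\log n)$ depth independent of the measurement history, so that the bound $D\le cR\log n$ holds pointwise. If the protocol needed extra weight information or a different measurement basis, I would reduce it to a weight measurement conjugated by one-qubit rotations, which costs only an extra $O(1)$ depth per round.
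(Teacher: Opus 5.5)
Your proposal is correct and follows essentially the same route as the paper. You realize the weight instrument $\rho\mapsto\Pi_w\rho\Pi_w$ by applying $\CHW$, measuring $\mathsf{W}$, and resetting it with conditional $X$ gates instead of uncomputing. You then inherit Yu et al.'s $O(\log n)$ expected-round bound, including its restart rule, and multiply by the $O(\log n)$ depth and $O(n\log n)$ size per round.

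The only difference concerns exactness. You take it from Yu et al., whereas the paper also checks it directly: collective rotations, weight measurements, and data resets keep the state in the permutation-symmetric subspace, where $\Pi_{n/2}$ has rank one with range spanned by $|D_{n/2}^n\rangle$.
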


\begin{proof}
Let $\Pi_w=\sum_{|x|=w}|x\rangle\langle x|$.
Theorem~\ref{thm:hamming-weight} gives
\begin{equation}
  \CHW|\psi\rangle_\mathsf{X}|0^m\rangle_\mathsf{W}
  =\sum_{w=0}^n\bigl(\Pi_w|\psi\rangle_\mathsf{X}\bigr)|w\rangle_\mathsf{W}.
  \label{eq:dicke-weight-measurement}
\end{equation}
Measuring $\mathsf{W}$ in the computational basis therefore realizes
the outcome-$w$ instrument $\rho\mapsto\Pi_w\rho\Pi_w$,
preserving coherence within each weight subspace. The measured
register is in the known state $|w\rangle$ and is reset by
parallel conditional $X$ gates, without uncomputing the counter.
Each measurement thus has depth $O(\log n)$ and size
$O(n\log n)$.

Use this instrument in the protocol of Ref.~\cite{YuEtAl2026},
retaining its rotation choices and its restart rule when
$|w-n/2|>\sqrt{n/2}$. The measurement statistics and conditional
states are unchanged, so its $O(\log n)$ expected-round bound
applies. Rotations and data resets add only constant depth and
$O(n)$ gates per round, proving the resource bounds. The collective
rotations and weight measurements preserve the permutation-symmetric
subspace, and the data resets return to it. Within that subspace,
$\Pi_{n/2}$ has the one-dimensional range spanned by
$|D_{n/2}^n\rangle$. Acceptance is therefore exact.
\end{proof}

The low-ancilla elementary-gate implementation described in
Ref.~\cite{YuEtAl2026} uses the $O(\log^2 n)$-depth counter of
Zi et al.~\cite{ZiNieSun2025}. Combined with the expected-round
bound, it gives expected depth $O(\log^3 n)$. Our substitution
reduces this to $O(\log^2 n)$ with the same $m$ ancillary qubits.
This comparison concerns that elementary-gate implementation;
the collective weight measurement available in the physical
model of Ref.~\cite{YuEtAl2026} is a different primitive.

More workspace permits other tradeoffs.
Rethinasamy et al.~\cite{RethinasamyLaBordeWilde2024} implement
the same weight measurement in depth $O(\log n)$ with $O(n)$
ancillary qubits, which also gives expected depth $O(\log^2 n)$
when inserted into Yu et al.'s protocol.

Yang et al.~\cite[Theorem~3]{YangWeiYin2026SpinSqueezing}
prepare the balanced Dicke state with $n+O(\log n)$ ancillary
qubits, depth $O(\log n)$ per attempt, and success probability
bounded below by a positive constant.
Their final exact weight measurement projects the
permutation-symmetric data state onto $|D_{n/2}^n\rangle$,
so successful preparation is exact.
Repetition therefore gives expected depth $O(\log n)$.
This gives a smaller expected-depth bound using linear
ancillary space, whereas our construction uses only $m$
ancillary qubits.

The heralded symmetrization algorithm of
Liu et al.~\cite[Result~1]{LiuChildsGottesman2026} prepares any
fixed-weight Dicke state with $O(n\log n)$ ancillary qubits and
expected depth $O(\log n\log\log n)$ under repetition.
Constant-depth probabilistic adaptive constructions with
polynomial workspace are also known~\cite{LuoChenMa2026Dicke};
their single-attempt depth must be considered together with
their success probability.

\begin{table*}[t]
\centering
\caption{Selected depth--space bounds for exact Dicke-state
preparation with all-to-all connectivity and elementary
one- and two-qubit gates. Here
$m=\lceil\log_2(n+1)\rceil$ and $r=\min\{k,n-k\}$.
Ancillary qubits are in addition to the $n$ output qubits.
Adaptive rows give expected total depth, including repetition
when required, counting measurements and resets but excluding
classical feedforward latency; unitary rows give circuit depth.
All entries are upper bounds. Balanced-state rows assume even $n$.
The unitary rows concern a fixed classical $k$ with
$1\le k\le n-1$; $k=0,n$ are product states.
The measurement-assisted and unitary constructions using our
counter are given in Corollaries~\ref{cor:dicke-adaptive}
and~\ref{cor:dicke-unitary}, respectively.}
\label{tab:dicke-comparison}
\begingroup
\small
\setlength{\tabcolsep}{7pt}
\renewcommand{\arraystretch}{1.2}
\begin{tabular*}{\textwidth}{@{\extracolsep{\fill}}lccc@{}}
\toprule
Construction & Target weight & Ancillary qubits & Depth \\
\midrule
\multicolumn{4}{@{}l}{\textit{Measurement-assisted preparation}} \\
Yu et al.\ + prior counter~\cite{YuEtAl2026,ZiNieSun2025}
  & $k=n/2$ & $m$ & $O(\log^3 n)$ \\
Yu et al.\ + our counter
  & $k=n/2$ & $m$ & $O(\log^2 n)$ \\
Yang et al.~\cite{YangWeiYin2026SpinSqueezing}
  & $k=n/2$ & $n+O(\log n)$ & $O(\log n)$ \\
\midrule
\multicolumn{4}{@{}l}{\textit{Unitary preparation}} \\
Yuan and Zhang~\cite{YuanZhang2025Dicke}
  & Any fixed $k$ & $0$
  & $O\!\left(r+\log(r+1)\log(n/r)\right)$ \\
Amplitude amplification + our counter
  & Any fixed $k$ & $m$ & $O(r^{1/4}\log n)$ \\
Liu et al.~\cite[Result~2]{LiuChildsGottesman2026}
  & Any fixed $k$ & $O(n\log n)$
  & $O(\log^3 n\log\log n)$ \\
\bottomrule
\end{tabular*}
\endgroup
\end{table*}

\subsection{Unitary preparation}
\label{subsec:dicke-unitary}

We now return to the measurement-free circuit model.
Building on amplitude-amplification approaches to Dicke-state
preparation~\cite{PiroliStyliarisCirac2024,NagibSaffmanMolmer2025},
we combine collective one-qubit rotations with exact fixed-weight
reflections. Our counter implements these reflections using
elementary gates and logarithmic ancillary space.

\begin{corollary}[Unitary Dicke-state preparation]
\label{cor:dicke-unitary}
For any fixed classical $k$ with $1\le k\le n-1$, let
$r=\min\{k,n-k\}$. There is an exact unitary circuit preparing
$|D_k^n\rangle$ with
\begin{equation}
  D=O(r^{1/4}\log n),\qquad
  S=O(nr^{1/4}\log n),
  \label{eq:dicke-unitary-resources}
\end{equation}
using $m$ ancillary qubits, returned to zero.
\end{corollary}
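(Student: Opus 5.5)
We follow the standard amplitude-amplification route. The starting state is the product state $|\psi_0\rangle=(\sqrt{1-p}\,|0\rangle+\sqrt p\,|1\rangle)^{\otimes n}$ with $p=k/n$, prepared by one layer of $R_y$ rotations. We may assume $k\le n/2$: otherwise apply $X^{\otimes n}$ at the end, since $X^{\otimes n}|D_{n-k}^n\rangle=|D_k^n\rangle$. The overlap is
\[
a^2=\|\Pi_k|\psi_0\rangle\|^2=\binom nk p^k(1-p)^{n-k}=\Theta(r^{-1/2})
\]
by Stirling's formula, uniformly for $1\le r\le n/2$. Because $|\psi_0\rangle$ is permutation symmetric, its projection onto weight $k$ is proportional to $|D_k^n\rangle$. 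The amplitude-amplification problem is therefore two-dimensional, in the plane spanned by $|D_k^n\rangle$ and the normalized weight-$\neq k$ component.

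Standard Grover iteration needs $O(1/a)=O(r^{1/4})$ rounds. Each round is a product of two reflections: $S_k=I-2\Pi_k$, and $S_{\psi_0}=I-2|\psi_0\rangle\langle\psi_0|$.

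We implement $S_k$ with the counter of Theorem~\ref{thm:hamming-weight}. Apply $\CHW$, then apply a phase $-1$ conditioned on $\mathsf{W}=k$, then apply $\CHW^{-1}$. The conditional phase is an $m$-controlled $Z$ with $X$ conjugations on the zero bits of $k$. We cannot afford a naive decomposition, because we want $O(\log n)$ depth without extra clean qubits. Instead we evaluate the Boolean function $[z=k]$ on $\mathsf{W}$ in phase form. Lemma~\ref{thm:boolean-oracle} applies to $h(z)=[z=k]$ on $m$ qubits with target in $|-\rangle$, but that target would be an extra qubit. So we borrow dirty space from $\mathsf{X}$ instead. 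The required budget $2^m-m-1$ may exceed $n$, so we split off the top bit exactly as in Sec.~\ref{subsec:symmetric-reduction}. More simply, we can reuse Theorem~\ref{thm:symmetric-oracle}'s construction in phase form. The phase oracle $(-1)^{f(x)}$ with $f(x)=[|x|=k]$ is obtained by replacing the final CNOTs into $\mathsf{T}$ with $Z$-type corrections. Concretely, in the sequence of Eq.~\eqref{eq:main-lookup-cleanup}, replace $\operatorname{CNOT}_{\mathsf{B}\to\mathsf{T}}$ by $Z_{\mathsf{B}}$. This gives net phase $(-1)^{b\oplus h(y)}(-1)^b=(-1)^{h(y)}$ with no target qubit. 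The same substitution works for the two Toffolis, which become $\CP(\pi)$ gates between $\mathsf{C}$ and $\mathsf{A}$. This realizes $S_k$ with depth $O(\log n)$ and size $O(n\log n)$, using only the $m$ clean qubits of $\mathsf{W}$.

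$S_{\psi_0}$ equals $V S_0 V^\dagger$, where $V$ is the rotation layer and $S_0$ reflects about $|0^n\rangle$. The operator $S_0$ is a phase oracle for the symmetric function $[|x|=0]$, so it is handled by the same construction with $k$ replaced by $0$. Thus each Grover iteration costs depth $O(\log n)$ and size $O(n\log n)$.

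The main obstacle is exactness, since ordinary Grover iteration overshoots unless $\pi/(4\theta)-1/2$ is an integer, where $\sin\theta=a$. We would handle this by first reducing the amplitude slightly. Replace $p$ by a slightly smaller per-qubit angle, or equivalently tilt the initial state, so that the new overlap $a'$ satisfies $\arcsin a'=\pi/(2(2T+1))$ with $T=\lceil \pi/(4\arcsin a)-1/2\rceil$. Then exactly $T$ iterations land on $|D_k^n\rangle$. Changing $p$ changes $a^2=\binom nk p^k(1-p)^{n-k}$ continuously and monotonically on $[0,k/n]$, from $0$ to its maximum. By the intermediate value theorem, some $p'\le k/n$ achieves the required $a'\le a$. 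Permutation symmetry is preserved, so the final state is exactly $|D_k^n\rangle$ up to a global phase.

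With $T=O(r^{1/4})$ iterations, the total depth is $O(r^{1/4}\log n)$ and the total size is $O(nr^{1/4}\log n)$. The ancilla $\mathsf{W}$ is returned to zero after every reflection. The remaining routine checks are the Stirling estimate for $a$ and the continuity argument for $p'$.
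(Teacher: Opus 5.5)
Your proof is correct and is essentially the paper's. It uses the same tilted product state with $P_k(p)=\sin^2[\pi/(4q+2)]$ and $q=\lceil\pi/(4\theta_*)-1/2\rceil$, and the same reflections $R_k$ and $R_0$ built by computing the weight, phasing $\mathsf{W}$, and uncomputing.

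The one difference is the phase on $\mathsf{W}$. The paper gets it directly from Barenco et al.'s multi-controlled gate with dirty qubits borrowed from $\mathsf{X}$, at depth and size $O(m)$, so your detour through the Boolean-evaluation circuit is unnecessary. That detour does work, but only if you apply the $Z$-substitution to the workspace-reduced version of Lemma~\ref{thm:boolean-oracle}, where each $K_\sigma$ becomes a doubly controlled $Z$. The full-tree sequence you cite needs $3\cdot 2^{m-1}-2$ dirty qubits plus $\mathsf{A}$, which is more than $n$.
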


\begin{proof}
Write $R_s=I_\mathsf{X}-2\Pi_s$. Computing the weight, phasing the
output, and uncomputing gives
\begin{equation}
\begin{split}
  &\CHW^\dagger
  \bigl[I_\mathsf{X}\otimes(I_\mathsf{W}-2|s\rangle\langle s|_\mathsf{W})\bigr]\CHW
  \bigl(|\psi\rangle_\mathsf{X}|0^m\rangle_\mathsf{W}\bigr)\\
  &\hspace{3em}=(R_s|\psi\rangle_\mathsf{X})|0^m\rangle_\mathsf{W}.
\end{split}
\label{eq:dicke-weight-reflection}
\end{equation}
The phase on $\mathsf{W}$ has depth and size $O(m)$ using dirty qubits
borrowed from $\mathsf{X}$, all restored exactly~\cite[Lemma~7.2]{BarencoEtAl1995}.
Hence each $R_s$ has depth $O(\log n)$ and size $O(n\log n)$,
with no ancillary qubits beyond $\mathsf{W}$.

To see explicitly why the preparation can be exact, let
$A_p$ be a layer of identical one-qubit rotations preparing
\begin{equation}
  |\psi_p\rangle=A_p|0^n\rangle
  =\bigl(\sqrt{1-p}|0\rangle+\sqrt p|1\rangle\bigr)^{\otimes n}.
\end{equation}
Its squared overlap with $|D_k^n\rangle$ is
$P_k(p)=\binom nk p^k(1-p)^{n-k}$, whose maximum is
$P_k(k/n)=\Theta(r^{-1/2})$.
Set
\begin{equation}
  \theta_* = \arcsin\sqrt{P_k(k/n)},\qquad
  q=\left\lceil\frac{\pi}{4\theta_*}-\frac12\right\rceil.
\end{equation}
By continuity, one can choose $p\in[0,k/n]$ such that
$P_k(p)=\sin^2[\pi/(4q+2)]$.
Within the permutation-symmetric subspace, $R_k$ is the
reflection that changes the sign of $|D_k^n\rangle$, while
$R_0=I_\mathsf{X}-2|0^n\rangle\langle0^n|$. Standard amplitude
amplification therefore gives
\begin{equation}
  \bigl(-A_pR_0A_p^\dagger R_k\bigr)^q|\psi_p\rangle
  =|D_k^n\rangle.
\end{equation}
The overall sign of each iterate is an irrelevant global phase.
Since $q=O(r^{1/4})$, each iteration uses two reflections and
two parallel rotation layers, giving the stated bounds.
\end{proof}

Yuan and Zhang~\cite{YuanZhang2025Dicke} give an ancilla-free
unitary construction with depth
$O(r+\log(r+1)\log(n/r))$.
Our depth bound is asymptotically smaller when
$r=\omega((\log n)^{4/3})$; for balanced states, the bounds are
$O(n)$ and $O(n^{1/4}\log n)$, respectively.
The total width increases only from $n$ to
$n+\lceil\log_2(n+1)\rceil=(1+o(1))n$.
For smaller weights one may instead use their construction;
selecting between the two gives depth
\begin{equation}
  O\!\left(\min\left\{
    r+\log(r+1)\log(n/r),\ r^{1/4}\log n
  \right\}\right)
  \label{eq:dicke-combined-depth}
\end{equation}
with at most $m$ ancillary qubits.
This comparison is for fixed classical $k$; it does not assert
a coherent preparation map for a superposition of weights.

With substantially more workspace, lower depth is possible.
The exact unitary symmetrization algorithm of
Liu et al.~\cite[Result~2]{LiuChildsGottesman2026} has depth
$O(\log^3 n\log\log n)$ and uses $O(n\log n)$ ancillary qubits;
it also handles coherent superpositions of weights.
Constant-depth unitary constructions with polynomial workspace
and unbounded-arity gates are available as
well~\cite{JoshiVasconcelos2026Dicke,GrettaGuptaJoshi2026Dicke}.
Decomposing these unbounded-arity gates into
elementary gates using additional polynomial workspace gives
$O(\log n)$ depth. Thus the unitary application above provides
a depth improvement over an ancilla-free construction while
keeping total width $n+O(\log n)$.

\section{Discussion}
\label{sec:discussion}

Our exact Hamming-weight circuit achieves
optimal depth $O(\log n)$ and minimum width $n+m$, with size
$O(n\log n)$. Here $m=\lceil\log_2(n+1)\rceil$ is the number
of binary output qubits. These bounds hold for arbitrary
one-qubit gates and CNOTs with all-to-all connectivity. The counter also gives exact circuits for every specified
symmetric Boolean function, with the same depth and size,
using $m$ reusable clean workspace qubits.

The counting and Boolean-evaluation constructions share a common
design: organizing the computation on a tree and pipelining
operations across its levels. This structure enables low-depth
computation with limited workspace.

The Dicke-state applications illustrate the benefits of this
approach with logarithmic ancillary space. For even $n$, the
balanced-state protocol's low-ancilla implementation improves
from $O(\log^3 n)$ to $O(\log^2 n)$ expected elementary-gate depth.
The unitary construction improves on the ancilla-free depth bound
in the weight range specified in Sec.~\ref{subsec:dicke-unitary},
with total width $n+O(\log n)$. These are improvements under
limited workspace; wider circuits can achieve lower depth.

A natural next goal is to achieve size $O(n)$ while retaining
depth $O(\log n)$ and width $n+m$ for counting. Further directions
include finding applications that turn these low-space primitives
into improvements in overall algorithmic resources, and extending
tree-based pipelines to other reversible computations.
Restricted connectivity and fault-tolerant gate sets also require
accounting for routing and rotation-synthesis costs at the desired
accuracy.

\begin{acknowledgments}
Wei Zi acknowledges support from the National Natural Science
Foundation of China (Grant No.~62602563) and the Guangdong
Provincial Quantum Science Strategic Initiative
(Grant No.~GDZX2503001).

The authors acknowledge the use of OpenAI's ChatGPT for
author-directed exploratory discussions, literature searches,
checks of derivations, and assistance with manuscript preparation,
including editing and LaTeX figure preparation.
The authors independently verified all mathematical claims and
proofs, reviewed and revised the final manuscript, and take full
responsibility for the correctness and integrity of this work.
\end{acknowledgments}

\section*{Author contributions}
W.Z. developed the main results and wrote the initial draft.
P.Y. and J.N. contributed to the development of the circuit
constructions through discussions and to the writing and
revision of the manuscript.

\section*{Competing interests}
The authors declare no competing interests.

\section*{Data availability}
This is a theoretical study. The constructions and proofs
supporting its findings are provided in the main text and
appendices.

\bibliography{ref}

\appendix
\section{Exact schedules for simultaneous modular addition}
\label{sec:simultaneous-addition}

This appendix gives the exact pipeline underlying
Sec.~\ref{subsec:tree-mechanism} and proves
Lemma~\ref{thm:simultaneous-addition}.
The schedule has two features: neighboring tree levels alternate
their decoding steps, and each decoded bit continues to supply
phase gates to higher bits of its own word and to its children.
Every logical round consists of a fixed number of parallel
substeps. We first establish progressive Fourier decoding, then
combine decoding and addition on a binary tree, and finally obtain
the simultaneous adder by composing two tree circuits.
Throughout, \(m\ge1\), word arithmetic is modulo \(N=2^m\), and
\(\widetilde F_m\) is the product Fourier transform in
Eq.~\eqref{eq:product-fourier}.

\subsection{Progressive Fourier decoding}

In the product Fourier representation, lower bits can be decoded
before higher bits. Once decoded, a bit controls the phase
corrections needed by the remaining higher bits.
For \(0\le i\le j<m\), define
\begin{equation}
  \alpha_{i,j}:=\frac{2\pi 2^i}{2^{j+1}}.
  \label{eq:alpha}
\end{equation}
We use a two-round spacing between successive decoded bits;
this will allow parent and child words to alternate their
decoding steps in the tree construction.

\begin{lemma}[Progressive exact decoder]
  \label{lem:progressive-decoder}
  The inverse transform \(\widetilde F_m^{-1}\) has an exact circuit
  of depth \(O(m)\) and size \(O(m^2)\), with no ancillary qubits,
  in which output bit \(j\) is decoded at logical round \(2j\).
\end{lemma}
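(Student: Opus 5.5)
We build the decoder directly from the product encoding, using the phase-cancellation identity of Eq.~\eqref{eq:main-bit-decoding}. The input is \(\widetilde F_m|z\rangle=\bigotimes_j|\phi_j(z)\rangle\), where qubit \(j\) carries relative phase \(e^{2\pi i z/2^{j+1}}\). Write \(z_i\) for the binary digits of \(z\).

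The first step is the decoding rule. Fix \(j\) and suppose bits \(z_0,\ldots,z_{j-1}\) already sit in computational encoding on qubits \(0,\ldots,j-1\). Apply \(\CP(-\alpha_{i,j})\) from qubit \(i\) to qubit \(j\) for every \(i<j\). Qubit \(j\) then carries phase \(\exp[2\pi i(z-\sum_{i<j}2^iz_i)/2^{j+1}]=(-1)^{z_j}\), because the remaining sum is \(2^jz_j\) plus a multiple of \(2^{j+1}\). A Hadamard maps this to \(|z_j\rangle\). The controls are diagonal and leave the already-decoded qubits in their basis states, so no extra phases appear. The map is therefore exactly \(\widetilde F_m^{-1}\) on basis inputs, and hence everywhere by linearity. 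The size is \(\binom m2\) controlled phases plus \(m\) Hadamards, which is \(O(m^2)\).

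The main obstacle is the schedule. We need \(O(m)\) depth with bit \(j\) decoded at round \(2j\), but the \(j(j-1)/2\)-style corrections cannot simply be serialized per bit. I would use a systolic assignment. In logical round \(t\), decoded bit \(i\) applies its correction to qubit \(j=t-i\), whenever \(i<j<m\) and bit \(i\) has already been decoded, that is, \(2i<t\). The Hadamard on qubit \(j\) is placed at the end of round \(2j\). Within one round, distinct controls \(i\) have distinct targets \(t-i\). Control and target sets are also disjoint, since controls have \(i<t/2\) and targets have \(j>t/2\).

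A single qubit can still act as a control for one gate and a target for another in the same round. To handle this, I would split each round into two substeps according to the parity of \(i\), or equivalently use an edge-colouring argument. Because the gates are diagonal and commute, any ordering within a round is valid, provided every correction \((i,j)\) precedes the Hadamard on \(j\) and follows the decoding of \(i\).

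It remains to check those timing constraints. The correction \((i,j)\) happens at round \(t=i+j\). Since \(i<j\), we have \(2i<i+j<2j\), so it occurs strictly after bit \(i\) is decoded at round \(2i\) and strictly before the Hadamard on \(j\) at round \(2j\). This holds for every pair, so all corrections are completed in time.

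Each round has constant depth, and there are \(2(m-1)+1\) rounds, so the total depth is \(O(m)\) with no ancillas. The remaining work is the within-round disjointness check, which I expect to settle by the parity split described above.
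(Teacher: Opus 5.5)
Your proposal is correct and follows the paper's proof essentially step for step. Both use the same schedule (correction \((i,j)\) in round \(i+j\), \(H_j\) at the end of round \(2j\)), the same inequality \(2i<i+j<2j\) for timing, and the same phase-cancellation argument for exactness. One small point: your worry that a qubit might serve as both control and target in one round contradicts your own observation that controls lie below \(t/2\) and targets above it. Each round's gates therefore already form a matching (qubit \(k\) can only pair with \(t-k\)), so the parity split is harmless but unnecessary.
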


\begin{proof}
  After bit \(i\) is decoded at round \(2i\), it corrects bit
  \(i+1\) in the next round, bit \(i+2\) in the following round,
  and so on. Thus its correction to bit \(j>i\) occurs at
  round \(2i+(j-i)=i+j\).
  Explicitly, at round \(t=0,\ldots,2m-2\), apply all gates
  \begin{equation}
    \CP_{i,j}(-\alpha_{i,j}),
    \qquad 0\le i<j<m,\quad i+j=t,
    \label{eq:decoder-corrections}
  \end{equation}
  followed by \(H_j\) if \(t=2j\).
  At fixed \(t\), the correction pairs form a matching, since
  qubit \(k\) can only be paired with qubit \(t-k\).
  Each round therefore has constant depth.

  Fix an input \(\widetilde F_m|z\rangle\), where
  \(z=\sum_{j=0}^{m-1}2^jz_j\).
  For \(i<j\), the correction at round \(i+j\) occurs after
  \(H_i\) and before \(H_j\), since \(2i<i+j<2j\).
  Inductively, immediately before \(H_j\), qubit \(j\) is in
  the state
  \begin{equation}
    \frac{|0\rangle+(-1)^{z_j}|1\rangle}{\sqrt2},
    \label{eq:decoder-before-h}
  \end{equation}
  because its relative phase is
  \[
    \exp\!\left[
      \frac{2\pi i}{2^{j+1}}
      \left(z-\sum_{i<j}2^iz_i\right)
    \right]=(-1)^{z_j}.
  \]
  The Hadamard produces \(|z_j\rangle\) with amplitude \(+1\).
  Hence the circuit maps every \(\widetilde F_m|z\rangle\) to
  \(|z\rangle\) without a residual phase, and linearity gives
  \(\widetilde F_m^{-1}\) on arbitrary inputs.

  The circuit uses \(m(m-1)/2\) controlled phases and \(m\)
  Hadamards over \(2m-1\) rounds. Constant-cost decompositions
  of the controlled phases give the stated bounds.
  Reversing the circuit and negating its phase angles implements
  \(\widetilde F_m\) with the same resources.
\end{proof}

\subsection{The pipelined path-sum permutation}

Let \(\mathcal S\) be a rooted binary tree of \(m\)-qubit word
registers, with root \(\rho\), \(L\) nonroot vertices, and height
\(h\). For a nonroot vertex \(v\), let \(p(v)\) be its parent.
The depth \(d(v)\) is the number of edges from \(\rho\) to \(v\),
so \(d(\rho)=0\) and \(d(v)=d(p(v))+1\).
Label the at most two children of each vertex as left and right,
arbitrarily but consistently.
For initial word labels \(z_v\), define the path sums by
\begin{equation}
  s_\rho=z_\rho,
  \qquad s_v=z_v+s_{p(v)}\pmod N.
  \label{eq:path-sums}
\end{equation}
Write \(s_v=\sum_{j=0}^{m-1}2^js_{v,j}\), with
\(s_{v,j}\in\bits\), and denote qubit \(j\) at vertex \(v\)
by \((v,j)\).
The map in Eq.~\eqref{eq:path-sums} is a permutation, with inverse
\begin{equation}
  z_\rho=s_\rho,
  \qquad z_v=s_v-s_{p(v)}\pmod N.
  \label{eq:path-sum-inverse}
\end{equation}
Let \(\Pi_{\mathcal S}\) denote the corresponding basis permutation.

\begin{lemma}[Pipelined path sums]
  \label{lem:pipelined-path-sums}
  The permutation \(\Pi_{\mathcal S}\) has an exact circuit of
  depth \(O(m+h)\) and size \(O(Lm^2)\), with no ancillary qubits.
\end{lemma}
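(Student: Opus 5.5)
The plan is to realize \(\Pi_{\mathcal S}\) as three stages. First apply \(\widetilde F_m\) to every nonroot word in parallel, using the reversed progressive decoder of Lemma~\ref{lem:progressive-decoder} (depth \(O(m)\), size \(O(m^2)\) per word). The root \(\rho\) stays in computational encoding. Second, run a pipeline of logical rounds containing only controlled phases and Hadamards; at its end every nonroot qubit \((v,j)\) holds \(s_{v,j}\). Third, nothing further is needed, since \(\rho\) is never touched and \(s_\rho=z_\rho\). The schedule below is an explicit timing rule; correctness will follow from a phase-accumulation invariant together with an ordering check.

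\emph{Schedule.} For a nonroot vertex \(v\) of depth \(d=d(v)\), apply \(H\) to \((v,j)\) at round \(\tau(v,j):=d+2j\). For the root, set \(\tau(\rho,i):=-\infty\), meaning its bits are available from the start. For \(i<j\), apply the internal correction \(\CP_{(v,i),(v,j)}(-\alpha_{i,j})\) at round \(d+i+j\). For \(i\le j\), apply the addition phase \(\CP_{(p(v),i),(v,j)}(\alpha_{i,j})\) at round \(d+i+j\) as well.

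Each round is split into constant-depth substeps in the following order:
\begin{enumerate}[leftmargin=*,itemsep=1pt,topsep=2pt]
\item internal corrections;
\item additions into left children at even depth, then at odd depth;
\item additions into right children at even depth, then at odd depth;
\item the scheduled Hadamards.
\end{enumerate}
Within one substep, a fixed word meets only pairs with \(i+j=t-d\), so the pairs form a matching. Separating left from right children ensures each parent qubit controls at most one gate per substep. Separating depths by parity ensures that no word simultaneously receives from its parent and sends to its child within a substep. Hence every round has constant depth.

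\emph{Ordering checks.} All gates other than \(H\) are diagonal, so they commute with each other. It therefore suffices to check two conditions: each phase targeting \((v,j)\) precedes \(H_{(v,j)}\), and each control qubit has already been decoded. For a correction, \(d+2i<d+i+j<d+2j\) when \(i<j\). For an addition with \(i<j\), the round \(d+i+j\) is strictly between \(\tau(p(v),i)=d-1+2i\) and \(d+2j\). For \(i=j\), the round equals \(\tau(v,j)\), and the addition substep precedes the Hadamard substep within that round.

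\emph{Correctness.} Argue on a computational basis input by induction on \(\tau\). Decoded qubits are classical on each basis branch, so controlled phases from them simply multiply the branch by a scalar. Immediately before \(H_{(v,j)}\), the relative phase on \((v,j)\) is
\[
\exp\!\Bigl[\tfrac{2\pi i}{2^{j+1}}\Bigl(z_v+\textstyle\sum_{i\le j}2^is_{p(v),i}-\sum_{i<j}2^is_{v,i}\Bigr)\Bigr].
\]
Parent bits \(i>j\) contribute multiples of \(2\pi\), so this equals the phase computed with \(z_v+s_{p(v)}-\sum_{i<j}2^is_{v,i}\), namely \((-1)^{s_{v,j}}\). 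Each Hadamard therefore yields \(|s_{v,j}\rangle\) with amplitude \(+1\), leaving no residual phase, exactly as in Eq.~\eqref{eq:decoder-before-h}. Linearity then extends the result to arbitrary inputs.

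\emph{Resources.} The last Hadamard occurs by round \(h+2m-2\), so together with the initialization the depth is \(O(m+h)\). Each nonroot word receives \(O(m^2)\) Fourier gates, corrections and addition phases, so the size is \(O(Lm^2)\). No qubits are used beyond the words of \(\mathcal S\).

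I expect the main obstacle to be the disjointness bookkeeping. A middle vertex acts simultaneously as a source for its children, a receiver from its parent, and a site of internal corrections. The parity-of-depth and left/right splitting above is my first attempt; if a collision remains, I would shift additions from odd-depth parents by a constant number of substeps rather than change \(\tau\).
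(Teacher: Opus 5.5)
Your proposal is correct and follows the paper's proof: the same parallel $\widetilde F_m$ initialization, the same schedule (corrections and additions at round $d(v)+i+j$, Hadamards at $d(v)+2j$), the same availability and ordering checks, and the same phase-accumulation exactness argument. The one difference is your extra split of the addition substeps by depth parity. It is valid but unnecessary: the paper observes that $(v,j)$ receives addition phases only in rounds $t\le d(v)+2j$ and controls additions only in rounds $t\ge d(v)+2j+1$, so the left/right split alone already gives disjoint supports, and your fallback shift is never needed.
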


\begin{proof}
  If \(L=0\), the permutation is the identity and requires no
  gates. Assume henceforth that \(L\ge1\).

  \emph{Initialization and logical rounds.}
  Apply \(\widetilde F_m\) to all nonroot words in parallel,
  leaving the root in computational encoding.
  Lemma~\ref{lem:progressive-decoder} gives an initialization
  depth of \(O(m)\).
  Number subsequent logical rounds from \(t=0\), after initialization.
  Each round consists of the following four substeps, in order:
  \begin{enumerate}[label=(\roman*),nosep]
    \item internal decoding corrections within each word;
    \item addition phases from each parent to its left child;
    \item addition phases from each parent to its right child;
    \item the Hadamards scheduled for this round.
  \end{enumerate}
  Within each substep, all scheduled gates are applied in parallel.
  The precise assignment below makes their supports disjoint.

  \emph{Alternating decoding and the local cycle.}
  We schedule \(H_{v,j}\), the Hadamard on \((v,j)\), at round
  \(d(v)+2j\). Thus successive bits of the same word are decoded
  two rounds apart. A nonroot parent decodes each bit one round
  before its child, so neighboring nonroot words alternate
  their Hadamards. The root needs no Hadamards: all its bits are
  available as addition controls from the start.

  Once a nonroot bit \((v,j)\) is decoded at round
  \(T=d(v)+2j\), it continues to control phase gates.
  In round \(T+\ell\), for \(\ell\ge1\), it corrects bit
  \(j+\ell\) of its own word and adds to bit \(j+\ell-1\)
  of each child. Thus both target positions advance by one per
  round; the first child target is the same bit position \(j\).
  Table~\ref{tab:path-sum-rounds} records this local rule.
  Gates with nonexistent targets are omitted.

  \begin{table}[tb]
    \centering
    \caption{Targets controlled by bit \((v,j)\) after its
    Hadamard in round \(T=d(v)+2j\). Internal corrections have
    negative angles, and additions to children have positive
    angles. The last row gives the general rule for \(\ell\ge1\);
    only target indices in \(0,\ldots,m-1\) are used.}
    \label{tab:path-sum-rounds}
    \begin{tabular}{@{}l@{\qquad}l@{\qquad}l@{}}
      \toprule
      Round & Internal target & Target in each child \\
      \midrule
      \(T+1\) & \(j+1\) & \(j\) \\
      \(T+2\) & \(j+2\) & \(j+1\) \\
      \(T+\ell\) & \(j+\ell\) & \(j+\ell-1\) \\
      \bottomrule
    \end{tabular}
  \end{table}

  To see how the next bit takes over, suppose \(j<m-1\).
  By the end of round \(T\), all contributions to \((v,j+1)\)
  from control positions below \(j\) have arrived.
  In round \(T+1\), this qubit receives the internal correction
  from \((v,j)\) and the addition phase from \((p(v),j)\).
  If the parent is nonroot, its bit \(j+1\) is decoded at the
  end of that same round, since
  \(d(p(v))+2(j+1)=T+1\).
  In round \(T+2\), that parent bit supplies the final addition
  phase to \((v,j+1)\), which then undergoes its own Hadamard.
  If the parent is the root, the required bit was already available.
  Meanwhile, each child decodes its bit \(j\) in round \(T+1\).
  The next bit can therefore enter the same cycle while earlier
  bits continue supplying phases to higher positions.

  \emph{Exact gate assignment.}
  For each nonroot \(v\), define addition gates for
  \(0\le i\le j<m\) and internal corrections for \(0\le i<j<m\)
  by
  \begin{align}
    A_{v;i,j}
      &:=\CP_{(p(v),i),(v,j)}(+\alpha_{i,j}),\notag\\
    C_{v;i,j}
      &:=\CP_{(v,i),(v,j)}(-\alpha_{i,j}).
    \label{eq:path-sum-gates}
  \end{align}
  The local rule above gives the schedule
  \begin{equation}
    \begin{aligned}
      A_{v;i,j}:&\quad t=d(v)+i+j,\\
      C_{v;i,j}:&\quad t=d(v)+i+j,\\
      H_{v,j}:&\quad t=d(v)+2j.
    \end{aligned}
    \label{eq:path-sum-schedule}
  \end{equation}
  Addition gates controlled by the root follow this same schedule.
  In particular, each nonroot vertex at depth \(d\) decodes its lowest
  bit in round \(d\), so the pipeline starts at the root and
  advances one tree level per round.

  \emph{Availability and disjoint supports.}
  For \(i<j\), the control of \(C_{v;i,j}\) is decoded
  \(j-i\ge1\) rounds before the correction is applied.
  If \(p(v)\ne\rho\), the control of \(A_{v;i,j}\) is decoded
  before its use by
  \begin{equation}
    (d(v)+i+j)-(d(v)-1+2i)=1+j-i\ge1.
    \label{eq:parent-bit-available}
  \end{equation}
  Root controls are available throughout.
  All phase gates targeting \((v,j)\) occur by round
  \(d(v)+2j\). Only \(A_{v;j,j}\) occurs in that final round,
  and it precedes \(H_{v,j}\) within the round.

  At fixed \(t\), internal corrections in word \(v\) satisfy
  \(i+j=t-d(v)\), so they form a matching.
  The addition gates on any fixed parent--child edge likewise
  form a matching between the two words.
  To check adjacent edges, observe that a nonroot qubit \((v,j)\)
  receives addition phases only in rounds
  \(t\le d(v)+2j\), whereas it controls additions to its children
  only in rounds \(t\ge d(v)+2j+1\).
  It therefore cannot receive and send an addition phase in the
  same round. Within either child substep, each vertex has at
  most one incoming and one outgoing edge; the preceding
  observation excludes collisions between them.
  Thus each addition substep has disjoint qubit supports, even
  when its edges share a word register.
  Finally, the Hadamards act on distinct qubits.
  Each of the four substeps has constant depth after elementary-gate
  decomposition, and hence so does each logical round.

  \emph{Exactness.}
  Fix a computational-basis input and induct over the Hadamard
  events. Before \(H_{v,j}\), the addition controls have values
  \(s_{p(v),i}\), and the internal correction controls have values
  \(s_{v,i}\). The relative phase of qubit \((v,j)\) is
  \begin{equation}
    \exp\!\left[
      \frac{2\pi i}{2^{j+1}}
      \left(
        z_v+\sum_{i=0}^{j}2^is_{p(v),i}
        -\sum_{i=0}^{j-1}2^is_{v,i}
      \right)
    \right].
    \label{eq:path-sum-decoding-phase}
  \end{equation}
  Equation~\eqref{eq:path-sums} implies that the expression in
  parentheses is congruent to \(2^js_{v,j}\) modulo \(2^{j+1}\).
  The phase is therefore \((-1)^{s_{v,j}}\), and the Hadamard
  produces \(|s_{v,j}\rangle\) with amplitude \(+1\).
  The root is unchanged. Every computational-basis input is thus
  mapped to its path sums with no residual phase, so linearity
  gives the exact unitary \(\Pi_{\mathcal S}\) on arbitrary inputs,
  including inputs entangled with an external reference.

  \emph{Resources.}
  The last scheduled round is at most \(h+2m-2\).
  Including the parallel Fourier initialization, the total depth
  is therefore \(O(m+h)\).
  Each nonroot word uses \(m(m+1)/2\) addition gates,
  \(m(m-1)/2\) internal corrections, and \(m\) Hadamards,
  together with \(O(m^2)\) gates for its initialization.
  Summing over the \(L\) nonroot words gives size \(O(Lm^2)\).
  No qubits beyond the tree's word registers are used.
\end{proof}

\subsection{Completing the simultaneous adder}

\begin{proof}[Proof of Lemma~\ref{thm:simultaneous-addition}]
  Arrange \(\mathsf{Q},\mathsf{B}_1,\ldots,\mathsf{B}_r\) on the complete binary tree
  \(\mathcal T\) of Sec.~\ref{subsec:tree-mechanism}, with
  \(\mathsf{Q}\) at the root. Removing \(\mathsf{Q}\) leaves a forest
  \(\mathcal T'\) with at most two components.
  Let \(\Pi_{\mathcal T'}\) apply the path-sum permutation to
  each component separately, leaving \(\mathsf{Q}\) unchanged.

  For initial labels \(q,b_1,\ldots,b_r\),
  Eq.~\eqref{eq:path-sum-inverse} shows that
  \(\Pi_{\mathcal T'}^{-1}\) leaves each first-level target
  at \(b_v\) and changes every deeper target to
  \(b_v-b_{p(v)}\).
  Applying \(\Pi_{\mathcal T}\) next gives \(b_v+q\) at the
  first target level. Inductively, every deeper target becomes
  \[
    (b_v-b_{p(v)})+(b_{p(v)}+q)=b_v+q\pmod N.
  \]
  Hence, in the usual operator order,
  \begin{equation}
    \MADD_{m,r}=\Pi_{\mathcal T}\Pi_{\mathcal T'}^{-1},
    \label{eq:appendix-madd-factorization}
  \end{equation}
  as in Eq.~\eqref{eq:main-madd}.

  Apply Lemma~\ref{lem:pipelined-path-sums} to \(\mathcal T\)
  and to each component of \(\mathcal T'\).
  The forest components run in parallel, with depths measured
  from their own roots. These roots remain in computational
  encoding, and only their descendants are Fourier-encoded.
  Reversing and inverting the forest circuit implements
  \(\Pi_{\mathcal T'}^{-1}\) with the same resource bounds.

  The full tree has \(r\) nonroot words, and the forest
  components together have at most \(r\) nonroot words.
  The two successive circuits therefore have total size
  \(O(rm^2)\) and depth \(O(m+h)\).
  Since the complete tree has height
  \(h=\lfloor\log_2(r+1)\rfloor\) and \(m,r\ge1\), the depth
  is \(O(m+\log r)\).
  Only \(\mathsf{Q},\mathsf{B}_1,\ldots,\mathsf{B}_r\) are used, completing the proof.
\end{proof}

\section{Boolean evaluation with fully dirty workspace}
\label{subsec:boolean-oracle}

We prove Lemma~\ref{thm:boolean-oracle} using the construction
introduced in Sec.~\ref{subsec:dirty-lookup-mechanism}. We first
give a circuit of depth \(O(k)\) and size \(O(2^k)\) using
\(3\cdot2^k-2\) dirty qubits. We then reduce the workspace
to at most \(2^k-k-1\) qubits without changing the asymptotic
depth or size. All borrowed qubits are restored exactly,
including their correlations with the input and any external
reference system.

\subsection{Tree registers and sequential routing}
\label{subsec:boolean-routing}

Fix \(k\ge1\) and a Boolean function \(h:\bits^k\to\bits\).
Let \(\mathsf{Y}=(\mathsf{Y}_0,\ldots,\mathsf{Y}_{k-1})\) be the input register and
\(\mathsf{T}\) the target qubit. For a computational-basis input, write
\(y_j\) for the bit on \(\mathsf{Y}_j\) and identify the bit string
with \(y=\sum_{j=0}^{k-1}2^jy_j\). Thus
\(h(y)=h(y_0,\ldots,y_{k-1})\), and the routing order is
from least to most significant bit.

Use a full binary tree of depth \(k\). A vertex at depth
\(j\) has a path label \(u\in\bits^j\), with root label
\(\emptyset\); its children are \(u0\) and \(u1\).
Each nonroot vertex has a channel qubit \(\mathsf{M}_u\), and each
internal vertex has a router \(\mathsf{R}_u\). The root therefore has
only \(\mathsf{R}_\emptyset\), while leaves have only channels.
Borrow all tree qubits, together with an external qubit \(\mathsf{B}\)
disjoint from \(\mathsf{Y},\mathsf{T}\) and the tree. There are \(2^k-1\)
routers and \(2^{k+1}-2\) channels, so the total number of
borrowed qubits is
\begin{equation}
  (2^k-1)+(2^{k+1}-2)+1=3\cdot2^k-2.
  \label{eq:routing-tree-count}
\end{equation}
No initial value is prescribed for any of these qubits.

At a nonroot internal vertex \(u\), perform the following
two gates in the displayed order, from left to right:
\begin{equation}
  \CSWAP_{\overline{\mathsf{R}_u}}(\mathsf{M}_u,\mathsf{M}_{u0})
  \;\longrightarrow\;
  \CSWAP_{\mathsf{R}_u}(\mathsf{M}_u,\mathsf{M}_{u1}).
  \label{eq:routing-pair}
\end{equation}
The first gate is controlled on \(\mathsf{R}_u=0\), and the second
on \(\mathsf{R}_u=1\). This pair swaps the incoming channel with the
child selected by the router and preserves the router itself.
At the root, replace \(\mathsf{M}_u\) by the external source wire
currently being routed, either \(\mathsf{Y}_j\) or \(\mathsf{B}\).
A routing layer at level \(r\) applies the pair at every
vertex of depth \(r\), including vertices outside the
selected path. The pairs within one layer have disjoint
supports.

First define the forward circuit by sequential address loading.
Swap \(\mathsf{Y}_0\) into \(\mathsf{R}_\emptyset\). For each
\(j=1,\ldots,k-1\), route the state initially on \(\mathsf{Y}_j\)
through levels \(0,\ldots,j-1\), then apply
\(\operatorname{SWAP}(\mathsf{M}_u,\mathsf{R}_u)\) at every vertex of depth
\(j\). These parallel SWAPs form the depth-\(j\)
\emph{deposit}. The depth-\(0\) deposit is the initial
swap into \(\mathsf{R}_\emptyset\). After all deposits, route the
state initially on \(\mathsf{B}\) through levels \(0,\ldots,k-1\).
Denote the resulting forward circuit by \(\Route_k\).

For each basis input, immediately after the depth-\(j\)
deposit the router on the path \(y_0\cdots y_{j-1}\)
contains \(y_j\). This follows by induction: the preceding
routers direct \(\mathsf{Y}_j\)'s value to that vertex, and the
deposit stores it in the router. Subsequent routing uses this
router only as a control, so its stored value remains in
place. The other routing pairs and deposits merely exchange
dirty values and do not alter this induction. Consequently
\(\Route_k\) sends the value initially on \(\mathsf{B}\) to leaf
\(y_0\cdots y_{k-1}\), independently of every initial
dirty value.

\subsection{The two-round routing pipeline}
\label{subsec:boolean-pipeline}

Index the states initially on \(\mathsf{Y}_0,\ldots,\mathsf{Y}_{k-1}\) by
\(j=0,\ldots,k-1\), and use \(j=k\) for the state
initially on \(\mathsf{B}\). A routing event for state \(j\) crosses
level \(r\), where \(0\le r<j\le k\). Schedule the
complete routing layers and deposits by
\begin{equation}
  \begin{array}{l|c}
    \text{Operation} & \text{Round} \\\hline
    \text{Route state }j\text{ across level }r & 2j+r \\
    \text{Deposit }\mathsf{Y}_j\text{ at depth }j & 3j
  \end{array}
  \label{eq:routing-time}
\end{equation}
Deposits have \(0\le j<k\); there is no deposit for \(\mathsf{B}\).
This is Eq.~\eqref{eq:main-routing-time}, illustrated in
Fig.~\ref{fig:lookup}(c). New states start at two-round
intervals, and each moving state crosses one level per round.
We verify disjointness, availability of controls, and
equivalence to the sequential circuit separately.

\paragraph{Disjoint supports within each round.}
A routing layer at level \(r\) uses routers at depth \(r\)
and channels at depths \(r\) and \(r+1\), with an external
source in place of the depth-\(0\) channel. Two distinct
routing layers in the same round have levels differing by a
nonzero even integer, and hence have disjoint supports.

A deposit at a nonroot depth \(j\) can overlap only routing
levels \(j-1\) and \(j\). A route of state \(s\) across
level \(j-1\) in the deposit round would require
\(2s+j-1=3j\), or \(2(s-j)=1\), which is impossible.
A route across level \(j\) in that round would require
\(s=j\), contrary to the routing condition \(j<s\).
The root deposit occurs in round \(0\), before any routing
event. Distinct deposits occur in distinct rounds. Thus all
complete routing pairs and deposit SWAPs assigned to a round
act on pairwise disjoint sets of qubits.

\paragraph{Availability of the stored address bits.}
For \(1\le j<k\), state \(j\) reaches depth \(j\) in
round \(3j-1\) and is deposited in round \(3j\).
Every later state \(s>j\) reaches the channels at that
depth in round \(2s+j-1\ge3j+1\), and uses the routers
there in round \(2s+j\ge3j+2\). For the next state, these
four events therefore occur in the order
\[
  \underbrace{3j-1}_{\text{arrival}}
  \;<\;
  \underbrace{3j}_{\text{deposit}}
  \;<\;
  \underbrace{3j+1}_{\text{next arrival}}
  \;<\;
  \underbrace{3j+2}_{\text{next use}}.
\]
The root router is deposited in round \(0\) and first used
in round \(2\). At every depth, all controlled uses of a
router follow its deposit, and the router is not a SWAP
target again during the forward circuit.

\paragraph{Equivalence to sequential loading.}
For each state, the routing levels occur in order, followed
by its deposit when present. Now consider a route of state
\(j\) at level \(r\) and a route of state \(s>j\) at
level \(q\). Overlapping supports require \(|q-r|\le1\), so
\begin{equation}
  (2s+q)-(2j+r)\ge2(s-j)-1\ge1.
  \label{eq:routing-order}
\end{equation}
Thus every overlapping route for the earlier state still
precedes the later one.

For a deposit at nonroot depth \(j\), no state with index
less than \(j\) has a routing event touching its qubits.
The final route of state \(j\) precedes the deposit, and
every overlapping route of a later state follows it, by the
availability inequalities above. The root deposit likewise
precedes all uses of its router. Deposits at different depths
have disjoint supports. The order of every pair of operations
sharing a qubit therefore agrees with sequential loading.
The pipeline only reorders operations with disjoint supports
and implements exactly the same circuit \(\Route_k\).

\paragraph{Depth and size.}
The final route of \(\mathsf{B}\) occurs in round \(3k-1\), so
the schedule uses \(3k\) logical rounds. Each routing pair
and deposit SWAP has constant exact cost in the one-qubit-gate
and CNOT model, without additional workspace. An open control
is obtained by conjugating its control qubit with \(X\).
The disjoint operations in each round can therefore be
implemented at constant depth, giving depth \(O(k)\) for
\(\Route_k\) and its inverse.

Counting operations at every vertex, the numbers of routing
pairs for the address states, routing pairs for \(\mathsf{B}\), and
deposit SWAPs, respectively, are
\begin{align}
  \sum_{j=1}^{k-1}\sum_{r=0}^{j-1}2^r
    &=2^k-k-1,\notag\\
  \sum_{r=0}^{k-1}2^r&=2^k-1,\notag\\
  \sum_{j=0}^{k-1}2^j&=2^k-1.
  \label{eq:routing-gate-count}
\end{align}
Hence the size is \(O(2^k)\), including operations outside
the selected path.

\subsection{Boolean evaluation and exact restoration}
\label{subsec:boolean-restoration}

Using the notation of Eq.~\eqref{eq:main-raw-query}, define
the leaf operation and the routed query by
\begin{equation}
  \begin{aligned}
    \mathcal L_h&:=
      \prod_{\substack{u\in\bits^k\\h(u)=1}}X_{\mathsf{M}_u},\\
    \Query_h&:=\Route_k^{-1}\mathcal L_h\Route_k.
  \end{aligned}
  \label{eq:raw-query}
\end{equation}
The values of the fixed function \(h\) determine which
leaf gates are present. They are not stored in an additional
quantum register. The leaf operation has at most \(2^k\)
gates and depth at most one.

\begin{lemma}[Routed query]
  \label{lem:routed-query}
  For every computational-basis assignment to \(\mathsf{Y}\), \(\mathsf{B}\),
  and the dirty tree, \(\Query_h\) preserves \(\mathsf{Y}\) and
  changes \(\mathsf{B}\) from its initial value \(b\) to
  \(b\oplus h(y)\). Other dirty tree qubits may change,
  but \(\Query_h^2=I\). The query has depth \(O(k)\)
  and size \(O(2^k)\).
\end{lemma}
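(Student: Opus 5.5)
Since every gate in $\Route_k$, $\mathcal L_h$ and their inverses is a basis permutation, I will argue on computational-basis states and then extend by linearity. Fix basis values $y$ for $\mathsf{Y}$, $b$ for $\mathsf{B}$, and arbitrary bits for all routers and channels.

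\textbf{Forward route.} Section~\ref{subsec:boolean-routing} shows that after $\Route_k$ the router $\mathsf{R}_{y_0\cdots y_{j-1}}$ holds $y_j$ for every $j<k$, and the leaf channel $\mathsf{M}_{y_0\cdots y_{k-1}}$ holds $b$. Section~\ref{subsec:boolean-pipeline} shows that the pipelined schedule is the same circuit as sequential loading. All other tree qubits and the wires $\mathsf{Y}_j$, $\mathsf{B}$ now contain some permutation of the dirty values. The exact arrangement is irrelevant to what follows.

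\textbf{Leaf operation.} $\mathcal L_h$ acts only on leaf channels. It flips the selected leaf to $b\oplus h(y)$, using the identification of the leaf label $u$ with the integer $y$. It may also flip other leaves. The key point is that $\mathcal L_h$ touches no router.

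\textbf{Reverse route.} $\Route_k^{-1}$ runs the same routing pairs and deposits backward. Its route of $\mathsf{B}$ is controlled by exactly the routers on the path $y_0\cdots y_{k-1}$, which are unchanged. It therefore carries the selected leaf's contents back to $\mathsf{B}$, giving $b\oplus h(y)$.

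The reversed deposits and address routes then return each $y_j$ to $\mathsf{Y}_j$. To justify this, I will use the general identity
\[
\Route_k^{-1}(P\otimes I_{\mathrm{routers}})\Route_k
\]
restricted to a fixed router configuration. Once the routers are fixed, $\Route_k$ acts as a controlled permutation of wire positions. Conjugating leaf flips by it therefore yields $X$ gates on the preimage positions. The preimage of the selected leaf is $\mathsf{B}$, since the sequential-loading description sends $\mathsf{B}$ there. The preimages of the other leaves are dirty tree qubits, or possibly other wires, but never $\mathsf{Y}_j$.

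This last claim is the main obstacle. I must show that no $\mathsf{Y}_j$ value ends at a leaf, while allowing other $\mathsf{Y}_j$-routed dirty values to do so. I will argue it directly: after its deposit, the value from $\mathsf{Y}_j$ sits in a router at depth $j<k$, and routers are only controls thereafter. So the $\mathsf{Y}_j$ contents never reach a leaf channel.

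\textbf{Conclusion.} Hence $\Query_h$ equals a product of $X$ gates on positions determined by the router contents. These contents are themselves functions of $y$ and of dirty router and channel values, which $\mathcal L_h$ leaves unchanged. In other words, $\Query_h$ is a router-controlled Pauli-$X$ pattern, with $\mathsf{Y}$ preserved and $\mathsf{B}$ receiving exactly $h(y)$. The restriction of $\Query_h$ to each router sector is $\sigma^{-1}\mathcal L_h\sigma$, whose square is $\sigma^{-1}\mathcal L_h^2\sigma=I$. Since $\Query_h$ preserves the routers' post-$\Route_k$ configuration, it is block diagonal in these sectors, and $\Query_h^2=I$ follows.

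\textbf{Resources.} These add directly. $\Route_k$ and its inverse have depth $O(k)$ and size $O(2^k)$ by Eq.~\eqref{eq:routing-gate-count}, and $\mathcal L_h$ has depth one and at most $2^k$ gates.
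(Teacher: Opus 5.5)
Your proposal is correct and follows essentially the paper's argument. Because $\mathcal L_h$ never touches a router, and each router is used only as a control and only after its deposit, the reverse route retraces the forward swaps; this returns $b\oplus h(y)$ to $\mathsf{B}$ and each $y_j$, which sits in a router and never at a leaf, to $\mathsf{Y}_j$. The router-sector argument for $\Query_h^2=I$ is unnecessary, though correct once the sectors are indexed by the post-$\Route_k$ router values, since $\Query_h^2=\Route_k^{-1}\mathcal L_h\Route_k\Route_k^{-1}\mathcal L_h\Route_k=\Route_k^{-1}\mathcal L_h^2\Route_k=I$ holds directly.
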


\begin{proof}
  Fix a basis assignment and track the values exchanged by
  the SWAPs. The forward circuit stores the address bits in
  the routers on the selected path and sends \(\mathsf{B}\)'s value
  to leaf \(y_0\cdots y_{k-1}\). The leaf operation flips
  that value exactly when \(h(y)=1\).

  To justify the reverse route, consider every router,
  including those outside the selected path. Its controlled
  uses all follow its deposit, and it is not subsequently a
  SWAP target in the forward circuit. Its deposited value
  therefore remains unchanged until the leaf operation,
  which acts only on channels at the leaves. During reversal,
  all controlled uses of this router occur before its inverse
  deposit. They consequently see the same control value as
  in the forward circuit. This argument applies at every
  vertex, regardless of the deposited dirty value.

  All inverse routing gates thus retrace the forward swaps.
  An inverse deposit can return a flipped off-path value to
  a router only after all controlled uses of that router have
  finished. Each address value was stored in a router when
  the leaf gates acted, so it returns unchanged to its
  original input wire. The value from \(\mathsf{B}\) returns to
  \(\mathsf{B}\), with flip \(h(y)\). Values on other marked leaves
  can return to dirty tree wires, so one query need not
  restore the whole tree. Nevertheless,
  \[
    \Query_h^2=\Route_k^{-1}\mathcal L_h^2\Route_k=I.
  \]
  The depth and size bounds follow from those of
  \(\Route_k\), its inverse, and the leaf operation.
\end{proof}

Now execute the following sequence from left to right:
\begin{equation}
  \Query_h
  \;\longrightarrow\;
  \operatorname{CNOT}_{\mathsf{B}\to \mathsf{T}}
  \;\longrightarrow\;
  \Query_h
  \;\longrightarrow\;
  \operatorname{CNOT}_{\mathsf{B}\to \mathsf{T}}.
  \label{eq:basic-query-cleanup}
\end{equation}
For each basis input, the two CNOTs see \(\mathsf{B}\)'s values
\(b\oplus h(y)\) and \(b\), and hence flip \(\mathsf{T}\) by
\[
  \bigl(b\oplus h(y)\bigr)\oplus b=h(y).
\]
The CNOTs do not change \(\mathsf{Y}\), \(\mathsf{B}\), or the tree, so
the two queries restore their entire basis assignment by
\(\Query_h^2=I\). Thus the sequence implements
Eq.~\eqref{eq:boolean-oracle-operator} and restores every
borrowed qubit. All gates are exact basis permutations, with
no additional phases. By linearity, the same evaluation and
restoration hold for arbitrary joint input states, including
states entangled with an external reference. The sequence has
depth \(O(k)\) and size \(O(2^k)\), using the workspace
in Eq.~\eqref{eq:routing-tree-count}.

\subsection{Reducing the workspace}
\label{subsec:boolean-workspace}

\begin{proof}[Proof of Lemma~\ref{thm:boolean-oracle}]
  For \(k\ge2\), treat \(\mathsf{Y}_0,\mathsf{Y}_1\), the first two bits
  in routing order, separately. Put
  \begin{equation}
    \ell:=k-2,\qquad \mathsf{Z}:=(\mathsf{Y}_2,\ldots,\mathsf{Y}_{k-1}).
    \label{eq:boolean-address-split}
  \end{equation}
  For a basis input, let
  \(z=\sum_{j=2}^{k-1}2^{j-2}y_j\), so that
  \(y=y_0+2y_1+4z\). When \(\ell=0\), the register
  \(\mathsf{Z}\) is empty and \(z=0\).
  For each \(\sigma=(\sigma_0,\sigma_1)\in\bits^2\),
  define the restricted function \(h_\sigma:\bits^\ell\to\bits\)
  by
  \begin{equation}
    h_\sigma(z):=h(\sigma_0+2\sigma_1+4z).
    \label{eq:suffix-tables}
  \end{equation}

  For \(\ell\ge1\), apply the preceding routing construction
  to \(\mathsf{Z}\), using a depth-\(\ell\) tree and the same
  borrowed qubit \(\mathsf{B}\) for all four functions. The forward
  circuit \(\Route_\ell\) is independent of \(\sigma\);
  only the leaf operation changes. Write
  \[
    \Query_{h_\sigma}
      =\Route_\ell^{-1}\mathcal L_{h_\sigma}\Route_\ell.
  \]
  For \(\ell=0\), use no tree and set
  \(\Query_{h_\sigma}:=X_\mathsf{B}^{h_\sigma(0)}\).
  In both cases, the query preserves \(\mathsf{Z}\), flips \(\mathsf{B}\)
  by \(h_\sigma(z)\), and squares to the identity.
  It does not act on \(\mathsf{Y}_0,\mathsf{Y}_1\) or \(\mathsf{T}\).

  Let \(K_\sigma\) flip \(\mathsf{T}\) precisely when
  \(\mathsf{Y}_0=\sigma_0\), \(\mathsf{Y}_1=\sigma_1\), and \(\mathsf{B}=1\).
  This is an \(X\) gate with three controls, with open
  controls where \(\sigma_i=0\). It has constant exact
  cost in our gate model, preserves its controls, and needs
  no additional qubits. For each \(\sigma\), execute
  \begin{equation}
    \Query_{h_\sigma}
    \;\longrightarrow\;
    K_\sigma
    \;\longrightarrow\;
    \Query_{h_\sigma}
    \;\longrightarrow\;
    K_\sigma.
    \label{eq:two-query-cleanup}
  \end{equation}
  Only the two target flips carry the additional controls;
  both queries are executed for every input.

  Fix a basis input. If \((y_0,y_1)=\sigma\), the two
  applications of \(K_\sigma\) flip \(\mathsf{T}\) by
  \[
    \bigl(b\oplus h_\sigma(z)\bigr)\oplus b
      =h_\sigma(z).
  \]
  Otherwise both target flips are inactive. In either case,
  the queries restore \(\mathsf{Z}\), \(\mathsf{B}\), and the whole tree
  because \(\Query_{h_\sigma}^2=I\), while \(\mathsf{Y}_0,\mathsf{Y}_1\)
  remain unchanged.

  Run these blocks for
  \(\sigma=(0,0),(0,1),(1,0),(1,1)\), reusing the same
  borrowed qubits. Each block restores its workspace before
  the next starts. Exactly one block matches \((y_0,y_1)\),
  and its target flip is
  \(h_{(y_0,y_1)}(z)=h(y)\). Since every block is an exact
  basis permutation, this conclusion also holds coherently
  for arbitrary joint states and external entanglement.

  Including \(\mathsf{B}\), the workspace contains
  \begin{equation}
    3(2^\ell-1)+1=3\cdot2^{k-2}-2
    \label{eq:boolean-workspace-total}
  \end{equation}
  dirty qubits. For \(k\ge2\), the inequality
  \(2^{k-2}\ge k-1\) gives
  \begin{equation}
    \begin{aligned}
      &(2^k-k-1)-\bigl(3\cdot2^{k-2}-2\bigr)\\
      &\qquad=2^{k-2}-k+1\ge0.
    \end{aligned}
    \label{eq:boolean-workspace-bound}
  \end{equation}
  There are eight queries and eight gates \(K_\sigma\).
  Their total depth is \(O(k)\) and total size is
  \(O(2^k)\). For \(k=2\), the tree is empty, the
  workspace is just \(\mathsf{B}\), and all eight queries are
  single-qubit gates or identities, so the same bounds hold.

  Finally, for \(k=1\), apply \(X_\mathsf{T}^{h(0)}\), followed
  by a CNOT from \(\mathsf{Y}_0\) to \(\mathsf{T}\) if
  \(h(0)\oplus h(1)=1\). The identity
  \[
    h(y_0)=h(0)\oplus\bigl(h(0)\oplus h(1)\bigr)y_0
  \]
  proves correctness. This circuit uses no borrowed qubits,
  agreeing with \(2^k-k-1=0\), and has constant depth
  and size. All cases therefore satisfy the claimed bounds
  without initially clean workspace.
\end{proof}

\end{document}